\newtheorem{theorem}{Theorem}[section]
\newtheorem{lemma}[theorem]{Lemma}
\newtheorem{definition}[theorem]{Definition}
\newtheorem{proposition}[theorem]{Proposition}
\newenvironment{proof}{\noindent {\bf Proof.}\ }{\qed\par\vskip 4mm\par}
\newenvironment{proof*}{\noindent {\bf Proof.}\ }{\par\vskip 4mm\par}
\newcommand{\qed}{\hfill $\square$}
\newcommandx*{\LDAUOmicron}[2][1=@pkling_false]{\mathcal{O}\ifthenelse{\equal{#1}{small}}{\bigl(#2\bigr)}{\left(#2\right)}}
\newcommandx*{\LDAUomicron}[2][1=@pkling_false]{\mathrm{o}\ifthenelse{\equal{#1}{small}}{\bigl(#2\bigr)}{\left(#2\right)}}
\newcommandx*{\LDAUOmega}[2][1=@pkling_false]{\Omega\ifthenelse{\equal{#1}{small}}{\bigl(#2\bigr)}{\left(#2\right)}}
\newcommandx*{\LDAUomega}[2][1=@pkling_false]{\omega\ifthenelse{\equal{#1}{small}}{\bigl(#2\bigr)}{\left(#2\right)}}
\newcommandx*{\LDAUTheta}[2][1=@pkling_false]{\Theta\ifthenelse{\equal{#1}{small}}{\bigl(#2\bigr)}{\left(#2\right)}}
\newcommandx*{\set}[2][2=@pkling_false]{\left\{#1\ifthenelse{\equal{#2}{@pkling_false}}{}{\;\middle|\;#2}\right\}}
\newcommand{\OBdA}{W.l.o.g.\xspace}
\newcommand{\calO}{\mathcal{O}}
\newcommand{\dist}[1]{\mathrm{dist}(#1)}
\newcommand{\subchain}[1]{\mathrm{subchain}(#1)}
\newcommand{\EM}{\mathrm{EM}}
\newcommand{\VM}{\mathrm{VM}}
\newcounter{algorithmNumber}
\newcommand{\algorithm}[1]{%
    \medskip\par%
    \noindent%
    \refstepcounter{algorithmNumber}%
    \textbf{Algorithm~\thealgorithmNumber~(\texttt{#1()}).\ }%
}
\newcounter{actionNumber}
\DeclareRobustCommand{\action}[1]{%
   \refstepcounter{actionNumber}%
   $A_{\theactionNumber\label{#1}}$}
\newcounter{symactionNumber}
\title{%
Towards Gathering Robots with Limited View in Linear Time: The Closed Chain Case%
\thanks{This work was partially supported by the International Graduate School ``Dynamic Intelligent Systems''.}}
\author{
    Sebastian Abshoff
    \and
    Andreas Cord-Landwehr
    \and
    Daniel Jung
    \and
    Friedhelm Meyer auf der Heide
}
\date{
    Heinz Nixdorf Institute \& Computer Science Department\\[0.2em]
    University of Paderborn (Germany)\\[0.2em]
    Fürstenallee 11, 33102 Paderborn\\[0.2em]
\quad\\[0.2em]
    \texttt{\{abshoff,cola,daniel.jung,fmadh\}@upb.de}
}
\begin{document}

\maketitle

\thispagestyle{empty}

\begin{abstract}
In the gathering problem, $n$ autonomous robots have to meet on a single point.
We consider the gathering of a closed chain of point-shaped, anonymous robots on a grid.
The robots only have local know\-ledge about a constant number of neighboring robots along the chain in both directions.
Actions are performed in the fully synchronous time model $\mathcal{FSYNC}$.
Every robot has a limited memory that may contain one timestamp of the global clock, also visible to its direct neighbors.
In this synchronous time model, there is no limited view gathering algorithm known to perform better than in quadratic runtime.
The configurations that show the quadratic lower bound are closed chains.
In this paper, we present the first sub-quadratic---in fact linear time---gathering algorithm for closed chains on a grid.
\end{abstract}

\section{Introduction}
Over the last years, there was a growing interest in problems related to the creation of formations by autonomous robots.
A benchmark problem is gathering:
Given a configuration of $n$ robots (on the plane or on the grid), they have to gather in one position not specified beforehand.
In this paper, we consider a closed chain of $n$ robots on a two-dimensional grid.
We assume that robots have a bounded viewing range: They can ``see'' the relative positions of a constant number of neighboring robots in both directions on the chain, but have no compass.
We assume the fully synchronous time model ($\mathcal{FSYNC}$).
The robots are anonymous and have a memory limited to save only one timestamp of a global clock which can be seen by their neighbors.
We provide an algorithm for gathering in this model which only needs
$\calO(n)$ rounds. This result is asymptotically optimal for worst case closed chains.

All known gathering algorithms with a limited viewing range, like the local view gathering algorithm by \textcite{MINCH}, only achieve a gathering time of $\calO(n^2)$, and come with matching lower bounds for their strategy.
Interestingly, the configurations for the lower bounds are closed chains.
\paragraph{Related Work}
There is a vast literature on robot problems, researching how specific coordination problems can be solved by a swarm of robots given a certain limited set of abilities.
The robots are usually point-shaped (hence collisions are neglected) and positioned in the Euclidean plane.
They can be equipped with a memory or are \emph{oblivious}, i.e., the robots do not remember anything from the past and perform their actions only on their current views.
If robots are anonymous, they do not carry any IDs and cannot be distinguished by their neighbors.
Another type of constraint is the compass model:
If all robots have the same coordinate system, some tasks are easier to solve than if all robots' coordinate systems are distorted.
\textcite{gathering-compasses} provide a classification of these two and also of dynamically changing compass models, as well as their effects regarding the gathering problem in the Euclidean plane.

The operation of a robot is considered in the common \emph{look-compute-move model} \cite{Cohen:2004a}, which parts one operation in three steps.
In the \emph{look} step the robot gets a snapshot of the current scenario from its own perspective.
During the \emph{compute} step, the robot computes its action, and eventually performs it in the \emph{move} step.
How the steps of several robots are aligned is given by the \emph{time model}, which can range from an asynchronous $\mathcal{ASYNC}$ model (for example, see \cite{Cohen:2004a}), where even the single steps of the robots' steps may be interleaved, to a fully synchronous $\mathcal{FSYNC}$ model (for example, see \cite{localgathering}), where all steps are performed simultaneously.
While $\mathcal{ASYNC}$ is more suited for questions of convergence or termination, the $\mathcal{FSYNC}$ model gives precises runtime bounds when the robots are working completely in parallel.

A collection of recent algorithmic results concerning distributed solving of basic problems like gathering and pattern formation, using robots with very limited capabilities, can be found in \textcite{flocchinioverview}.

An interesting problem under the $\mathcal{FSYNC}$ model for point shaped robots with only local view is the shorting of some given (open) communication chain between
two fixed (unmovable) robots.
In \cite{hopper}, this problem is solved in time $\calO(n)$ for robots on a grid with the \emph{Manhattan Hopper} strategy.
For the Euclidean plane the similar $\emph{Hopper}$ strategy results in a $\sqrt{2}$-approximation, also after time $\calO(n)$.

One of the most natural problems is to gather a swarm of robots on a single point.
Usually, the swarm consists of point-shaped, oblivious, and anonymous robots and the problem is widely studied in the Euclidean plane.
Having point-shaped robots, collisions are understood as merges/fusions of robots and interpreted as gathering progress (cf.\ \textcite{MINCH}).
\textcite{gathering-icalp} provide the first gathering algorithm for the $\mathcal{ASYNC}$ time model with multiplicity detection and global views.
Gathering in the local setting was studied by \textcite{localgathering}.
\textcite{impossibilityofgathering} studied situations when no gathering is possible.
The question of gathering on graphs instead of gathering on the plane was considered by \textcite{practicalrendevouzaktuell}, \textcite{rendezvousingraphen} and \textcite{gatheringOnRing}.
\textcite{gatheringongrids} showed that for gathering on grids multiplicity detection is not needed, and they further provide a characterization of solvable gathering configurations on grids.

Only few results are known about the runtime for gathering algorithms.
In \cite{gatheringthetanquadrat}, a $\calO(n^2)$ runtime bound is shown for an algorithm, working on point shaped robots in the Euclidean plane with constant viewing range in the $\mathcal{FSYNC}$ model.
Here, the robots synchronously compute the center of the smallest enclosing circle of all robots within their viewing range and then move towards its center.
The authors also prove that for their algorithm this bound is tight.

But whether in this model gathering can be solved faster, is still an unanswered question.
Our main goal is to show that in this model gathering is possible in linear time.
In our paper we get closer to this by developing a linear time strategy for gathering of closed chains/loops on a grid.
\paragraph{Model and Notation}
For a set of robots $\mathcal{R} \coloneqq \{1,\ldots,n\}$, every robot $r\in \mathcal{R}$ has exactly two neighbors such that the robots form a closed chain.
We say that two robots $r$ and $r'$ have distance $k$ if the number of robots on the (shorter) chain between $r$ and $r'$ is exactly $k-1$, and the sequence of robots between $r$ and $r'$ (including both) is called a \emph{subchain}.
This distance is denoted by $\dist{r,r'}$ and the \emph{subchain} by $\subchain{r,r'}$.
The sequence of all robots at constant distance to a robot $r$ is called the \emph{viewing range} of $r$.
The robots are positioned on a two-dimensional grid and the $L_1$-distance between a robot and its direct neighbors according to the chain is at most $1$.
Each robot has only a local coordinate system, i.e., a private knowledge about north and west.
For our discussion, we only use north and west in terms of a global coordinate system and map the movements of the robots accordingly.

Time proceeds in fully synchronous rounds ($\mathcal{FSYNC}$).
We group a constant number of rounds to phases and subphases such that a phase consists of a constant number of subphases and a subphase consists of a constant number of rounds.
There exists a global clock, accessible by the robots, which provides the current round number.
In every round, the robots simultaneously consider their local views, compute their next operations, and finally perform their operations.
If after a hop two neighboring robots (according to the chain) are located at the same position, we call this a \emph{merge} and the merged robots are further considered as one robot.
In this case, neighborhoods are merged accordingly and the local coordinate system is arbitrarily selected from one of the merged robots.

Robots can decide to ``run'' along the chain in a specific chain direction, i.e., iteratively swapping the chain order with the next robot in a moving direction (clockwise or counter-clockwise).
They also can perform diagonal hops in order to locally modify the shape of the chain.
We call such a robot a \emph{runner}.
Runners can detect which of the robots in their viewing range are also runners including the relative movement direction on the chain.
Robots have a limited memory to store exactly one timestamp, which is visible to all other robots in the viewing range.
If not mentioned otherwise, the timestamp when the run was started is stored in that memory.

Our strategy will, using the runners, transform the chain such that it can be locally shortened.

For being able to build an algorithm, we disassemble the whole chain into subchains (modules) of specific types and also identify runners with a specific type of a ``moving module'' and then call these modules \emph{runs}.
We will make sure that a run does not overlap with robots of other runs.
The distance $\dist{R,R'}$ of two runs $R$ and $R'$ is the minimal distance $\dist{r,r'}$ of two robots $r\in R$ and $r'\in R'$.
The \emph{subchain} (including both modules, $R$ and $R'$) between $R$ and $R'$ is denoted by $\subchain{R,R'}$.

We say the gathering problem is solved if all robots are included in a square of constant size.
This situation then could be recognized by the robots if we, for example, would extend their capabilities by a constant viewing range on the grid (not only along the chain).
Then gathering could be continued until all robots are located at just a single point or in a square of size 1.
\paragraph{Our Contribution}
In our model, the problem of shortening an open chain between two fixed (unmovable) robots, the first runtime bound of $\calO(n^2\log(n))$ has been shown in \cite{gtm}.
Later, this was improved in \cite{hopper}:
In the Euclidean plane, the \emph{Hopper} strategy delivers a $\sqrt{2}$-approximation of the shortest chain in time $\calO(n)$.
Restricted to a grid, the \emph{Manhattan Hopper} strategy delivers an optimal solution in time $\calO(n)$.

The (general) gathering problem in the Euclidean plane is solved in time $\calO(n^2)$ in \cite{gatheringthetanquadrat} by the \emph{Go-To-The-Center} strategy.
It is still unknown if the gathering problem in this setting can be solved faster.
Our main goal is the improvement of this bound.
In this paper, we start by restricting the robots locations to a grid and just looking at closed chains of robots.
This problem is similar to the open chain problem mentioned above (assume that both endpoints are located at the same position).
But since a closed chain does not have endpoints anymore, several new difficulties arise:
None of the robots are distinguishable and all robots are allowed to move.
We present a strategy which solves the gathering problem in time $\calO(n)$.
Note that closed chains are challenging---some also worst case---inputs for the quadratic \emph{Go-To-The-Center} strategy.
\section{Gathering Algorithm}
The high-level idea of our  gathering algorithm is as follows:
Time is partitioned into phases of constant length and all robots act synchronously. Within a phase, three kinds of actions take place:
\begin{itemize}
\item Merges: If a robot observes to be in a configuration where it can ``merge'' with a neighbor such that the chain remains connected, it merges.
Merged robots behave like one robot from now on. 
\item Starting runs: Robots whose neighborhood has a certain (locally observable) property start running along the chain in a certain direction. They become \emph{runners}.
\item Running: \emph{Runners} move forward along the chain. For this, the runner swaps its position on the chain with the next robot in moving direction.
\end{itemize}
The following properties imply a linear time gathering algorithm:
\begin{itemize}
\item A merge of two robots constitutes a progress of the algorithm, because the chain becomes shorter by one. After $n-1$ merges, gathering is done.
\item In every period of a fixed constant number of phases, in which no merge happens, a ``good pair of runs'' starts. Such good pairs result in a merge after $\calO(n)$ phases. Different good pairs result in different merges. This is what we call \emph{pipelining}.
\end{itemize}
Defining good pairs is crucial. In addition, turning this idea into an algorithm turns out to be challenging. Among, we have to explicitly handle concurrent runs which might influence each other in an undesired way.

Every robot synchronously executes the main Algorithm \ref{algo:gathering} \texttt{gathering()}.
This means that even if during the calls of the subalgorithms (\texttt{merge()}, \texttt{execute-run()}, \ldots) no actions are performed,
the robots wait instead in order to ensure the synchronicity.
Also the subalgorithms are executed fully synchronously.

During their movements \emph{runners} can perform diagonal hops.
This is used for reshaping the chain such that a local merge can be performed afterwards.
In order to find suitable starting points for runners, we cannot rely on the robots themselves, as they are indistinguishable.
Hence, we take local geometric properties of the chain that can be recognized by the affected robots even with their restricted viewing ranges.
A closed chain on a grid can be interpreted as a rectangular polygon with self-intersections.
The edges of such a polygon then correspond to the straight subchains of the configuration.
Taking the geometric shape of such a polygon, the basic idea is to pairwise start runners, moving towards each other, at both ends of every edge.
We will show that if no merge is possible, at least one of the newly started run pairs will reshape the chain this way that some phases later a local merge can be performed.
Such run pairs will be called ``good pairs''.
While moving, a runner performs diagonal hops to a fixed side ($A$ or $B$ in Figure~\ref{fig:doubleturn}.$(a)$) of the polygon's edge.
It turns out that a merge operation can be performed when these two runners meet, in case both runners of such a pair perform these hops to the global same side ($A$ or $B$) of the edge.
This is the case, if during a walk on the polygon's border in arbitrary fixed orientation, the turns at the beginning and end of the edge are performed
in the same direction (cf.\ Figure~\ref{fig:doubleturn}.$(a)$).
\begin{figure}[h]
\centering
    \includegraphics{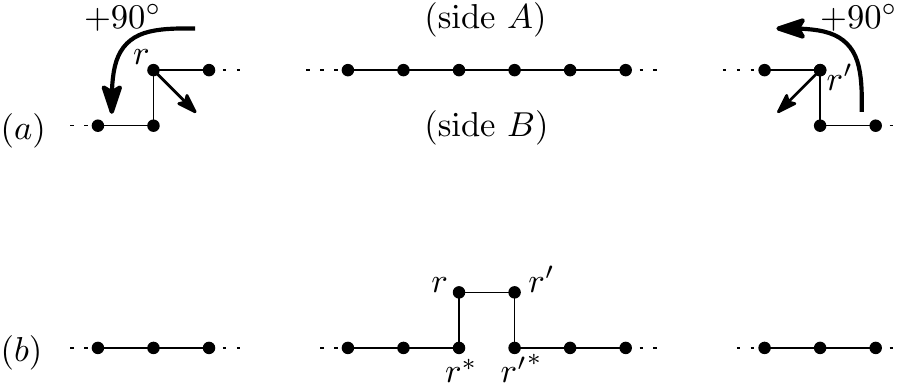}
    \caption{$(a)$: On a walk along its border, two consecutive vertices of a rectangular polygon perform a turn in the same direction ($+90^\circ$). Observe that both runners $r,r'$ initiate their hops to the same side (here: side $B$) of the chain. $(b)$: Both runners $r,r'$ have been moving towards each other until a merge can be detected and performed, locally. $r$ merges with $r^\ast$, and $r'$ merges with ${r'}^\ast$.}
    \label{fig:doubleturn}
\end{figure}
Figure~\ref{fig:doubleturn}.$(b)$ shows what happens after several steps, when both runners have moved towards each other and are close enough:
the participating robots within a constant viewing range can now detect that a merge is possible. Then $r$ merges with $r^\ast$, and $r'$ merges with ${r'}^\ast$.

For implementing this as an algorithm (Section~\ref{sec:mergeless}), we have to put some structure on the polygonal shape.
For example, the starting of runners at every vertex does not make sense if the polygon's shape includes stairways of alternating left and right turns of size one.
This leads us to the construction of \emph{modules} (cf.\ Figure~\ref{fig:stair} and \ref{fig:stairheight}).
We will introduce so called \emph{edge modules} and \emph{vertex modules} into which the chain can be decomposed and show that even if the chain intersects itself,
still two consecutive vertex modules exist such that both perform a turn in the same direction.
This then is the generalization of what we have described above, concerning Figure~\ref{fig:doubleturn}.$(a)$.
Runners will also be interpreted as modules and we call them \emph{runs}:
They are moving \emph{edge modules} of height $1$ which include the associated \emph{runner}.

Before going into more details about runs, we now present the main algorithm and the easier part, i.e., chains which allow local merges without the need of starting runs, namely the \emph{merge configurations}.
In each phase, every robot synchronously executes two procedures:
The first one continues all existing runs.
The second one every $L$ phases performs merges and starts new runs afterwards, where $L$ is a constant.
During the other phases every robot waits for the same number of timesteps, in order to achieve equally sized phases.
The definition of $L$ and some further constants will be discussed in Lemma~\ref{lem:run-properties}.

\def \algogathering{
\algorithm{gathering}
\texttt{
\begin{enumerate}
    \item execute-run() \hfill (see Algorithm~\ref{algo:move})
    \item if (phase-number $\bmod$ L = 0)
    \item $\qquad$ merge() \hfill (see Algorithm~\ref{algo:merge})
    \item $\qquad$ initialize-run() \hfill (see Algorithm~\ref{algo:runinit})
\end{enumerate}}
(For pseudocode, cf.\ Listing~\ref{lst:main-algo} in the appendix.)}%
\algogathering\label{algo:gathering}%
\medskip
\par
\noindent
We will show that after every interval of a constant number of phases, either a merge operation can be performed or else a pair of runs can be initiated, which will lead to a merge operation after at most linear many phases.
In total, this yields a linear running time for the gathering of the whole chain.

\subsection{Merge Configurations}
Looking at the shape of the chain, we say that it is a \emph{merge configuration} if for at least one robot a module of the merge actions of Table~\ref{table:robot-actions_a}.$A_{\ref{action:merge}}$ matches, i.e., the robot identifies itself as being one of the black robots of the corresponding module.
Then, the robot performs the corresponding hop action.
The result is that all these black robots perform a hop such that the subchain is shortened by two robots.
The size of the merge types is upper bounded by a constant $K$.
Later, Lemma~\ref{lem:run-properties} deals with a suitable value for this constant.

In general, merge modules can also overlap.
Figure~\ref{fig:overlap} shows how we define \emph{overlapping}:
two neighboring merge modules can either \emph{overlap} by two (Type 1) or by three robots (Type 2).
(Overlappings by only one robot are not covered by our definition because they do not need a special treatment.)
Figure~\ref{fig:overlap}.($a$) shows the Type $1$ overlapping.
Both modules overlap by two robots.
If the merge action is applied to both modules simultaneously, the robots $r$ and $r'$ swap their positions instead of merging.
If the modules overlap by three robots (Figure~\ref{fig:overlap}.($b$)), the overlap Type is $2$.
If in this case both modules would apply their merge actions at the same time, two different merge actions apply to the robot $r$.
We deal with overlapping as follows:
Looking at longest subchains of overlapping merge modules of a fixed merge type and a fixed overlap type, we want that only the outmost modules
of such subchains perform their merge action.
The robots of a merge module can detect this locally by just checking if they have an overlapping with two neighboring merge modules of their own merge type and overlap with both in the same overlap type.
We perform this check sequentially for every possible combinations of merge types and overlap types.
In case that the subchains consist of at least three modules, this restores the desired behavior of merge actions.
For subchains of two modules, the merge actions are performed as in Table~\ref{table:robot-actions_c}.$A_{\ref{action:overlap_a},\ref{action:overlap_b}}$.
In both cases, at least two robots are removed. This ensures the removal of robots when merge modules exist.
Lemma~\ref{lem:merge_working} proves that this procedure works correctly.

\begin{table}[t]
\caption{Robot actions $(c)$.}
\label{table:robot-actions_c}
\smallskip
\def\tabularxcolumn#1{m{#1}}
\centering
\begin{tabularx}{\textwidth}{p{1cm} X}
	& \hspace{2.3cm}\textbf{Pattern} \hspace{3.8cm} \textbf{Action} \\
	\hline\hline\\
	\action{action:overlap_a} & \hspace{2.3cm}\includegraphics[scale=0.85]{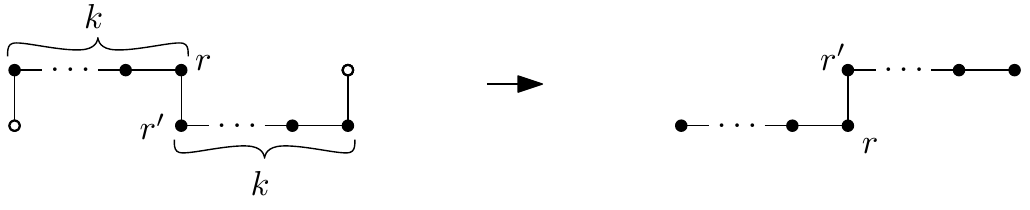} \\
	\action{action:overlap_b} & \hspace{2.3cm}\includegraphics[scale=0.85]{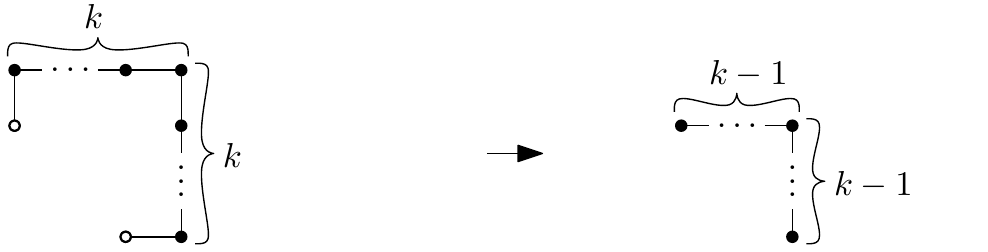} \\
	\\
	& Symmetric hops of two overlapping merge modules of the same merge type $k$. $A_{\ref{action:overlap_a}}$: The white robots are removed and $r$ and $r'$ swap their positions. $A_{\ref{action:overlap_b}}$: Both white robots plus two of the black ones are removed by the merge. \\
	 \hline
\end{tabularx}
\end{table}

\begin{figure}[h]
\centering
        \includegraphics{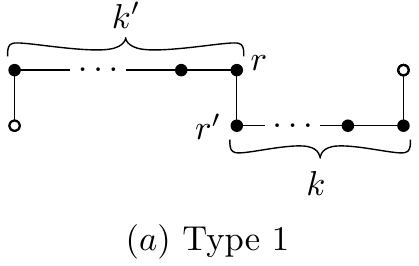}
        \hspace{1cm}
        \includegraphics{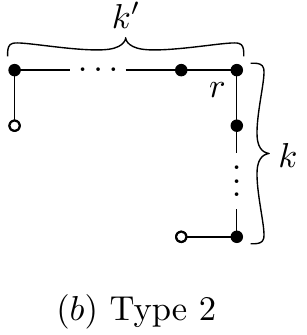}
    \caption{Definition of \emph{overlappings}. Type $1$: Two merge modules overlap by two robots. Type $2$: Two merge modules overlap by three robots.}
    \label{fig:overlap}
\end{figure}

\def \algomerge{
\algorithm{merge}
Every robot $r$ does the following:\\
for each merge type $k\in\{1,\ldots, K\}$ the robot checks:
    \begin{itemize}
    \item if merge type $k$ matches (i.e., if $r$ is one of the black robots of a merge pattern of type $k$ of Table~\ref{table:robot-actions_a}.$A_{\ref{action:merge}}$) then
        \begin{itemize}
        \item let $M'$ be this $k$ merge module of the subchain
        \item if $M'$ has an overlapping of Type $1$ with at most one other $k$ merge module, then the black robots of $M'$ perform the corresponding hop action of Table~\ref{table:robot-actions_a}.$A_{\ref{action:merge}}$ (or of Table~\ref{table:robot-actions_c}.$A_{\ref{action:overlap_a}}$ if the subchain consists of exactly two of these modules.).
        \item if $M'$ has an overlapping of Type $2$ with at most one other $k$ merge module, then the black robots of $M'$ perform the corresponding hop action of Table~\ref{table:robot-actions_a}.$A_{\ref{action:merge}}$ (or of Table~\ref{table:robot-actions_c}.$A_{\ref{action:overlap_b}}$ if the subchain consists of exactly two of these modules.).
        \end{itemize}
    \end{itemize}
All robots perform these checks fully synchronously for all $k$ and overlap types.

Finally, a call of \texttt{cleanup-runs()} solves collisions among runs (cf.\ Section~\ref{sec:mergeless}) that have come too close because of the merge operations.
(For pseudocode, cf.\ Listing~\ref{lst:merge-subphase} in the appendix.)}
\algomerge\label{algo:merge}

\medskip
\par

\begin{lemma}
\label{lem:merge_working}
If the chain is a merge configuration and algorithm \texttt{merge()} is performed, then either the gathering problem is solved or at least one robot is merged.
\end{lemma}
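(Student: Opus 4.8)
The plan is to argue by a careful case analysis over the \emph{maximal} subchains of mutually overlapping merge modules, showing that in every such subchain at least one merge (removing at least two robots) is carried out, and that the hops performed simultaneously never conflict. The starting observation is that, by the definition of a merge configuration, at least one robot matches some merge pattern of Table~\ref{table:robot-actions_a}.$A_{\ref{action:merge}}$, so at least one merge module $M'$ of some type $k$ exists. The remaining question is only whether the overlap-handling of \texttt{merge()} could accidentally suppress \emph{all} modules, and whether two active modules could issue contradictory instructions to a shared robot.

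First I would fix a merge type $k$ and an overlap type and consider a maximal subchain $M_1,\ldots,M_t$ of $k$-modules that consecutively overlap in that overlap type. The key local fact is that within the constant viewing range (bounded by $K$) every robot can decide whether it is a black robot of a $k$-module and whether that module overlaps one or two neighboring $k$-modules of the same overlap type; hence the predicate ``$M'$ overlaps with at most one other $k$-module'' used in \texttt{merge()} is locally evaluable and singles out exactly the two outmost modules $M_1,M_t$ (or the single module when $t=1$). I would then distinguish three cases: $t=1$, where an isolated module performs its standard hop of $A_{\ref{action:merge}}$ and its subchain is shortened by two; $t\ge 3$, where only $M_1$ and $M_t$ fire their standard hops while the interior modules are suppressed because they overlap two neighbors; and $t=2$, where both modules are outmost and therefore, instead of the standard hop, the symmetric overlap actions $A_{\ref{action:overlap_a}}$ resp.\ $A_{\ref{action:overlap_b}}$ of Table~\ref{table:robot-actions_c} are applied. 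In each case I would read off from the explicit geometry of the patterns that at least two robots are removed and that the chain stays connected.

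Next I would verify non-interference. Within a fixed maximal subchain the firing modules ($M_1$ and $M_t$ for $t\ge 3$, or the single/overlap action otherwise) act on robot sets separated by the suppressed interior modules, so no robot receives two conflicting hops; for $t=2$ this separation fails, which is precisely why the dedicated actions $A_{\ref{action:overlap_a}},A_{\ref{action:overlap_b}}$ are defined, as they resolve the shared robots $r,r'$ explicitly (swap, or remove two white plus two black robots). Across distinct maximal subchains, distinct merge types, and distinct overlap types I would argue, using that all pattern sizes are at most $K$ and that the viewing range is large enough to expose the full relevant neighborhood, that any two modules a single robot could simultaneously belong to either coincide or are far enough apart to receive consistent hops, with the per-type sequential evaluation guaranteeing a well-defined action for every robot.

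The main obstacle I expect is exactly this non-interference analysis for the overlapping cases: one must show that the local suppression rule never empties a maximal subchain of active modules (so progress is guaranteed) while simultaneously never letting two active modules issue contradictory hops to a shared robot. The delicate point is the boundary case $t=2$, where both modules are outmost and would otherwise collide at the shared robots $r,r'$; handling it is what the special actions $A_{\ref{action:overlap_a}}$ and $A_{\ref{action:overlap_b}}$ are engineered for. Once every maximal overlapping subchain is shown to yield at least one conflict-free merge, the lemma follows: either the configuration has already collapsed into a constant-size square (gathering solved) or at least one robot is merged.
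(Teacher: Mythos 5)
Your case analysis has a genuine gap: it silently assumes that every maximal subchain $M_1,\ldots,M_t$ of overlapping merge modules has two ends, i.e., that there exist outmost modules $M_1$ and $M_t$ which the suppression rule lets fire. On a \emph{closed} chain this assumption can fail: the overlapping modules can wrap around the entire chain cyclically, so that every module overlaps with two neighbors of the same merge type and overlap type, every module is ``interior'', and the suppression rule stops \emph{all} of them---no merge action is performed at all. Your own statement that ``the local suppression rule never empties a maximal subchain of active modules (so progress is guaranteed)'' is exactly the claim that is false in this situation, and you give no argument for it. This wrap-around case is not a pathological afterthought; it is the heart of the paper's proof and the very reason the lemma's conclusion is a disjunction. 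The paper argues: if no merge action fires, then all merge modules must be of one type and must overlap their neighbors cyclically in one overlap type; a closed chain can only close up this way with overlap Type~$2$ (Figure~\ref{fig:overlap}.($b$)), and in that case the whole chain lies inside a square of size $(K-1)\times(K-1)$, so by the paper's definition the gathering problem is already solved.

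Relatedly, the logical structure of your conclusion is off: you claim that \emph{every} maximal overlapping subchain yields at least one conflict-free merge, and then state the lemma's disjunction---but if that claim were true, the ``gathering solved'' branch would never be needed, and you never exhibit any configuration in which it applies. In the paper, that branch is precisely the cyclic wrap-around configuration described above. Your treatment of the cases $t=1$, $t=2$, $t\ge 3$ and of non-interference of simultaneous hops is fine in spirit (and matches the paper's handling of overlaps via $A_{\ref{action:overlap_a}}$ and $A_{\ref{action:overlap_b}}$), but without the closed-chain case the proof does not establish the lemma.
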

\begin{proof}
Obviously, every merge action merges at least one robot.
Consider a merge configuration where the execution of algorithm \texttt{merge()} could not perform any merge action.
Then, all merge modules have to be of the same merge type $k$ and have to overlap with their neighbors in the same overlap type.

If all merge types and all overlap types are equal, then the chain can only be closed if the overlap type is $2$ (Figure~\ref{fig:overlap}.($b$)).
Then, the whole chain is located within a square of at most size $(K-1)\times(K-1)$.
\end{proof}

\section{Mergeless Configurations}\label{sec:mergeless}
We call a configuration \emph{mergeless} if no merge can be performed locally, i.e., no merge module from Table~\ref{table:robot-actions_a}.$A_{\ref{action:merge}}$ applies.
To gather such configurations, we use \emph{runners}, which are robots that move along the chain and can perform diagonal hops.
We show that they can transform mergeless configurations into merge configurations so that local merges can be performed.
We define suitable starting points for runners in terms of geometric properties of the input chains which are interpreted as rectangular polygons with self-intersections.
This is necessary since the robots are indistinguishable.

For being able to put this approach into an algorithm, we do the following:
Depending on local geometric shapes of the chain, we define types of subchain modules into which a mergeless configuration can be fully decomposed.
In more detail, we define so called \emph{edge modules} (EMs) and \emph{vertex modules} (VMs) (cf.\ Figure~\ref{fig:stair}).
Both kinds of modules start and end with three collinear robots (in Figure~\ref{fig:stairheight}, these robots are filled with white color) and in between, they have a non-negative number of stairs, the so-called \emph{stairway} (cf.\ Figure~\ref{fig:stairheight} and Figure~\ref{fig:stair}).
The stairs can be seen as walking on the chain and alternatingly performing vertical and horizontal steps (or vice versa), while, concerning vertical as well as horizontal movements, we keep the initially chosen direction and never move backwards.
If the numbers of vertical and horizontal steps differ by one, the module is an EM and the lines through the first and last collinear robots are parallel.
Otherwise, the module is a VM and the lines through the first and last robots are perpendicular.
The height of a module is the maximum of the number of vertical and the number of horizontal steps in that module, whereas $\EM(h)$ ($\VM(h)$) is the set of all edge modules (vertex modules) of height $h$.
By overlapping the first (last) three robots of a module (i.e., the white ones in Figure~\ref{fig:stairheight}), these modules can be glued together.
Lemma~\ref{lem:mergelessmodules} proves that every mergeless configuration can actually be decomposed into the desired way (for the proof, see the appendix.).

\begin{figure}[t]
\centering
    \includegraphics{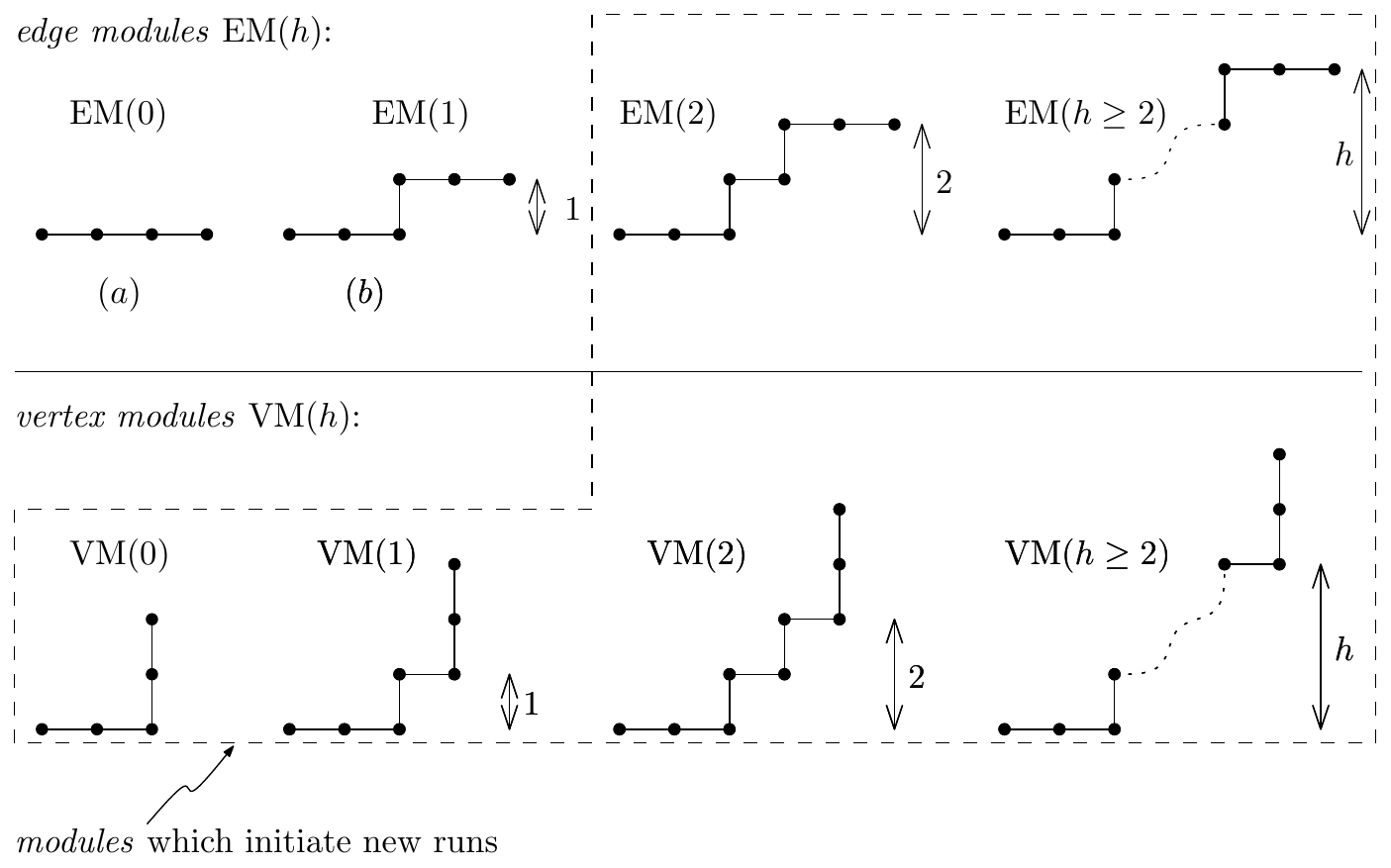}
    \caption{Edge modules and vertex modules. The modules included in the dashed polygon initiate new runs.}
    \label{fig:stair}
\end{figure}

\begin{figure}[th]
    \centering
    \includegraphics{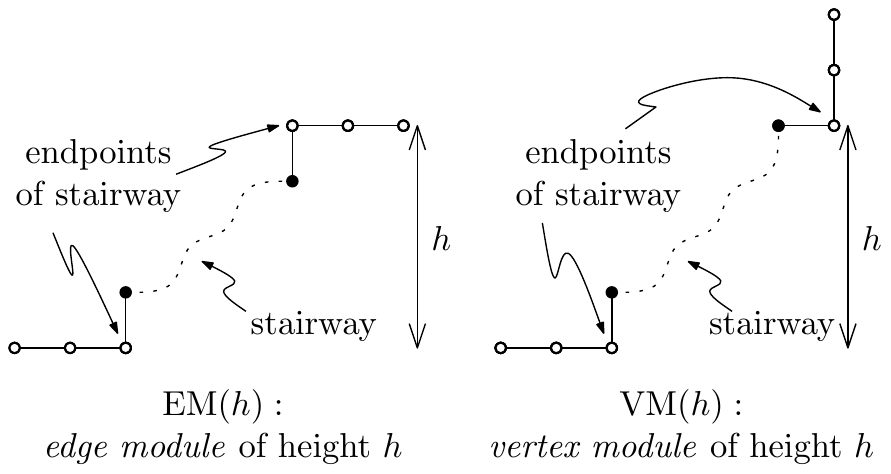}
    \caption{Height of edge modules $\EM(h)$ and vertex modules $\VM(h)$. In edge modules, the lines through the triples of white robots are in parallel. In vertex modules, these lines are perpendicular.}
    \label{fig:stairheight}
\end{figure}
\begin{lemma}
\label{lem:mergelessmodules}
Every mergeless configuration consisting of at least two robots can be fully decomposed into edge modules and vertex modules.
\end{lemma}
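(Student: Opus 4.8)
The plan is to argue entirely on the cyclic combinatorial structure of the chain, ignoring the geometric embedding (so that self-intersections cause no trouble), and to read the decomposition off the sequence of turns. First I would observe that mergelessness forces every edge of the chain to be a genuine unit step: if two consecutive robots coincided they would already have merged, so the chain is a closed lattice walk, i.e.\ a cyclic word over the four directions $\{N,S,E,W\}$ that returns to its start. At every robot I classify the local shape as \emph{straight}, when its two incident steps share an orientation and the robot sits on a collinear triple (a white robot of Figure~\ref{fig:stairheight}), or as a \emph{turn} of $\pm 90^\circ$. The decomposition is then obtained by a single greedy traversal: starting at a collinear triple I extend a module along the chain as long as the traversed turns keep forming a monotone staircase, and I close the module at the next collinear triple, which I simultaneously hand over as the opening triple of the following module.

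The core of the proof is to show that this greedy rule never gets stuck, i.e.\ that between two consecutive collinear triples the chain is always a proper \emph{stairway}: the horizontal steps never reverse direction and neither do the vertical steps, so the walk is monotone in both coordinates and the turns alternate left/right. This is exactly the point at which the mergeless hypothesis is used. I would show that any deviation from monotone alternation inside a turning region --- in particular a reversal of the horizontal or vertical direction without an intervening straight triple --- produces a bounded local pattern that coincides with one of the merge modules of Table~\ref{table:robot-actions_a}.$A_{\ref{action:merge}}$, contradicting mergelessness. Concretely I would enumerate, up to the constant size $K$, the local turn patterns that a non-monotone step forces and verify case by case that each is a merge pattern; the prototypical case is a unit-height notch or peak, which reduces by the standard merge. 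Having ruled these out, every maximal turning region is monotone, hence a stairway.

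It then remains to classify and to glue. For each maximal stairway I would count its horizontal and vertical steps: following the definition above, if the two counts differ by one the opening and closing triples lie on parallel lines and the module is an edge module $\EM(h)$, whereas if the counts are equal the triples are perpendicular and the module is a vertex module $\VM(h)$, with $h$ the maximum of the two counts; a maximal straight segment is the degenerate edge module whose stairway is empty. Because every cut is placed on a collinear triple, two neighbouring modules share precisely the three robots of that triple, so they glue as described in the text, and traversing the cyclic word once shows the modules cover the whole chain without gaps or double-counting beyond the shared triples.

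Finally I would dispose of the boundary cases. A closed lattice walk cannot be a single straight segment, so every mergeless chain of at least two robots that is not already gathered contains at least one turn, and hence at least one module is created and the traversal is well defined; the cyclic wrap-around of the last module into the first is handled by starting the traversal at a collinear triple, and the remaining tiny configurations are confined to a constant-size square and are decomposed directly. The main obstacle is the monotonicity step of the second paragraph: getting the pattern enumeration complete and robust, and --- since the argument is phrased on the cyclic order of the chain rather than its embedding --- confirming that the merge modules are themselves local along the chain, so that being mergeless is genuinely a property of the cyclic word alone.
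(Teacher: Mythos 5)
Your overall route is essentially the paper's own: use mergelessness to forbid the two bad local patterns (an immediate reversal of direction, which is the type-1 merge module, and two consecutive turns in the same sense, which is the type-2 merge module), conclude that every region between collinear triples is a monotone alternating stairway, cut the chain at the collinear triples, and classify each piece as an edge or vertex module according to whether the horizontal and vertical step counts differ by one (parallel bounding lines) or are equal (perpendicular bounding lines). That core, including the gluing along shared triples, is correct and coincides with the paper's argument.

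There is, however, a genuine gap in your final paragraph: your greedy traversal---and with it the whole decomposition---must \emph{start} at a collinear triple, but you never prove that a mergeless closed chain contains one. What you establish instead is the dual (and unneeded) statement that the chain contains at least one \emph{turn}; the inference ``at least one turn, hence at least one module is created and the traversal is well defined'' is a non sequitur, since modules are delimited by collinear triples, not by turns. The missing statement is not automatic: a priori every robot could be a turn robot, and your key monotonicity claim is phrased conditionally (``between two consecutive collinear triples''), so it is vacuous precisely in that case. The repair is exactly the first paragraph of the paper's proof: if no robot were straight, mergelessness would force the turns to alternate left/right all the way around the chain, making the walk monotone in both coordinates, so it could never return to its starting point, contradicting closedness; hence there must be two same-sense turns separated by forward steps, which produces three robots on a line. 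A second, minor, discrepancy: a long straight segment is not a single ``degenerate edge module'' but is covered by several overlapping $\EM(0)$ modules of bounded size, as the paper does; your phrasing would otherwise clash with the module definition.
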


We also want to talk about moving \emph{modules} instead of runners.
For this, we identify runners with moving modules, namely $\EM(1)$ modules (cf.~Figure~\ref{fig:stair} and \ref{fig:goodpair})
and call these modules \emph{runs}.

We need that if we start new runs in a mergeless configuration, at least one pair of them is a ``good pair'', i.e., it modifies the chain such that a local merge will be possible.
For this, we need the geometric property of Lemma~\ref{lem:goodvertexpair} as a basis (For the proof see appendix.).
Lemma~\ref{lem:goodpairexistence} will complete this argumentation.

Consider a walk along the chain in some arbitrary but fixed chain direction.
When passing a vertex module, we either turn by $90^{\circ}$ to the left or to the right.
In the first case, we call this vertex module \emph{convex} and \emph{concave} otherwise.

\begin{lemma}\label{lem:goodvertexpair}
Every mergeless configuration contains at least one pair of consecutive vertex modules, such that both are either \emph{convex} or both are \emph{concave}.
Here, consecutive means that the subchain between them only consists of edge modules.
\end{lemma}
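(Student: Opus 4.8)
The plan is to track the oriented walking direction as we traverse the closed chain and to derive a contradiction from assuming that no such pair exists. By Lemma~\ref{lem:mergelessmodules} the mergeless configuration decomposes into edge modules and vertex modules, each beginning and ending with three collinear (hence axis-aligned) white robots. Walking along the chain in the fixed chain direction, I would assign to each white triple its oriented direction in $\{E,N,W,S\}$. An edge module has parallel entry and exit lines and, since the staircase never moves backwards, preserves this oriented direction (net turn $0^\circ$); a vertex module has perpendicular entry and exit lines and hence rotates the direction by $+90^\circ$ if \emph{convex} and by $-90^\circ$ if \emph{concave}. Thus the oriented direction changes only at vertex modules, by exactly $\pm 90^\circ$.

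First I would assume, for contradiction, that every pair of consecutive vertex modules has opposite orientation. Reading the cyclic convex/concave labels of the vertex modules, this says the labels strictly alternate, so the turn sign alternates as $+90^\circ,-90^\circ,+90^\circ,\dots$ around the chain. Consequently the oriented direction oscillates between some value $d$ and $d\pm 90^\circ$; that is, it attains only two values, which are perpendicular and in particular \emph{adjacent} (never opposite) cardinal directions. After rotating the global frame I may assume these two directions are $E$ and $N$.

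The crux is then a net-displacement count with respect to the linear functional $f(x,y)=x+y$, the projection onto the north-east diagonal. The subtlety, and what I expect to be the main obstacle, is that the modules are staircases, so individual grid steps need not all point ``forward'': an edge module whose triples point $E$ may drift either north-east or south-east, and one pointing $N$ may drift north-east or north-west, so $f$ is not monotone step by step. I would instead bound the \emph{net} $f$-change of each module. A vertex module transitions between $E$ and $N$ and therefore uses only east and north unit steps (at least one of each), so its net $f$-change is strictly positive. An edge module carries one more step of one type than of the other; for an $E$-module this forces $\#\text{east}=\#\text{vertical}+1$, giving net $f$-change $\#\text{east}\pm\#\text{vertical}\ge 1>0$ irrespective of whether the vertical drift is north or south, and symmetrically for an $N$-module. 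Hence every module strictly increases $f$ in net, and the cancellation that a south-east or north-west staircase might threaten is always outweighed by the single extra forward step of an edge module.

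Summing the net $f$-changes once around the chain then yields a strictly positive total, contradicting the fact that the chain is closed and returns to its start with the same value of $f$. Therefore the assumption fails and some pair of consecutive vertex modules must share the same orientation. The degenerate cases are immediate and I would dispatch them first: a chain with no vertex module cannot close, and an odd number of vertex modules cannot be strictly alternating, so in both situations the desired pair exists directly.
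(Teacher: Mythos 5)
Your high-level plan is viable and is, at bottom, the paper's own argument in algebraic form: assuming the convex/concave labels strictly alternate, the walking direction takes only two adjacent values, and one contradicts closedness via monotone progress of $f(x,y)=x+y$ (the paper's nested half-planes, orthogonal to the orientation vectors of the vertex modules, are precisely the sublevel sets of this $f$). But the step on which your whole argument turns---the edge-module bound---is justified by a backwards count. In the decomposition of Lemma~\ref{lem:mergelessmodules}, the staircase of an $E$-module must \emph{begin and end with vertical steps}, since a horizontal first or last stair step would be collinear with the adjacent white triple; strict alternation then forces $\#\text{vertical}=\#\text{east}+1$ inside the staircase, the opposite of your claim $\#\text{east}=\#\text{vertical}+1$. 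So there is no ``single extra forward step'': the excess step of an edge module is perpendicular to its direction, and a south-drifting staircase of an $E$-module has net $f$-change $(h-1)-h=-1$. Positivity of the module as a whole is still true, but it comes from the straight end-segments (each contributing at least $+2$), not from the staircase.

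This interacts with a second problem: the summation. The modules of Lemma~\ref{lem:mergelessmodules} are glued by overlapping their end triples (and you yourself describe each module as beginning and ending with the white triple), so the per-module net $f$-changes do not sum to $0$ around the closed chain; they sum to the $f$-value of the double-counted overlap steps, which is $+2$ per junction, i.e.\ $2k$ for $k$ modules. Against this, your claimed bound of $\geq 1$ per module gives only $\sum \geq k$---no contradiction at all. Both defects can be repaired simultaneously: partition the unit steps of the chain into disjoint pieces, each consisting of one maximal straight segment together with the staircase that follows it. Under the alternation assumption each straight segment contributes $\geq +2$, each vertex-module staircase contributes $\geq 0$ (it uses only $E$ and $N$ steps), and each edge-module staircase contributes $\geq -1$ by the corrected count; hence every piece has net $f$-change $\geq +1$, the pieces genuinely tile the chain, their changes must sum to $0$, and the contradiction follows. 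With these two repairs your proof closes and essentially reproduces the paper's half-plane argument; as written, the central inequality rests on a wrong count and the sum identity on a decomposition that is not disjoint, so the proof does not go through.
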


We let the modules contained in the dashed area in Figure~\ref{fig:stair} initialize runs at their endpoints.
The corresponding actions are specified in Table~\ref{table:robot-actions_a}.$A_{\hyperref[action:ALG_runinit_generalized_1]{\ref*{action:runinit_a}-\ref*{action:runinit_c}}}$ and $A_{\hyperref[action:ALG_runinit_generalized_2]{\ref*{action:runinit_spec_a}-\ref*{action:runinit_spec_e}}}$ (for a detailed list of all shapes considered in \texttt{initialize-run()}, see Table~\ref{table:robot-actions_b} in the appendix.).
Algorithm \texttt{initialize-run()} using these actions performs the initialization of new runs.

The EMs of height less than $2$, i.e., (a) and (b) in Figure~\ref{fig:stair}, are the only modules which do not start any runs and we call a subchain only consisting of EM(0) and EM(1) modules a \emph{quasi edge}.

\subsection{Runs}\label{ssec:runs}
Every constant number (namely $L$) of phases, algorithm \texttt{merge()} and algorithm \texttt{initialize-run()} are executed.
If the configuration is mergeless, i.e., \texttt{merge()} does not result in the merge of at least two robots, we show that then at least one pair of the initialized runs will lead to a merge after at most linear many phases.
(Recall that runs are moving $\EM(1)$ modules, including one robot as a runner. Cf.\ Figure~\ref{fig:goodpair}.)

Assume, we have a quasi edge, i.e., a subchain only consisting of $\EM(0)$ and $\EM(1)$ modules, connecting two of the run initiating modules.
At every endpoint of a quasi edge a run is initiated, such that these two runs will move towards each other.
We call these two runs a \emph{run pair}.
The runs of a pair may differ concerning reflections and rotations of their local shapes.
The run pairs we need for merges, are \emph{good pairs}.
\begin{definition}\textit{(Good Pair)}\quad
Let $R,R'$ be a \emph{run pair}, generated in a mergeless configuration.
If $R'$ is a reflection of $R$ (including the runner) over an axis which is orthogonal to the alignment of the quasi edge $\subchain{R,R'}$, then we call $R,R'$ a \emph{good pair} of runs (cf.\ Figure~\ref{fig:goodpair}).
\end{definition}
If we look at the bold marked robots in Figure~\ref{fig:goodpair}.($a$), we recognize that the subchain between these robots can be shortened according to the $L_1$-distance, but the robots cannot detect this locally.
What we mainly do is to wait until $R_0$ and $R_1$ have come close enough (by repeating execution of \texttt{execute-run()}) so that the involved robots can detect this situation within their constant viewing range.
This is valid because during their movement, the runs keep their orientation and local shape.
Then, always at least one of the merge actions (see Table~\ref{table:robot-actions_a}.$A_{\ref{action:merge}}$) can be applied.
The movement of $R_0$ and $R_1$ is assured by the actions from Table~\ref{table:robot-actions_a}.$A_{\ref{action:hop}}$ and~\ref{table:robot-actions_a}.$A_{\ref{action:hop2}}$, which can always be applied to runs on a quasi edge.

In the remaining section, we will prove the following.
\begin{enumerate}
\item If the chain is mergeless (i.e., no merge can be performed) then at least one new \emph{good pair} can be generated:
\begin{enumerate}
    \item Geometric properties (Lemma~\ref{lem:goodvertexpair}) and module decomposition (Lemma~\ref{lem:mergelessmodules}) lead to existence of a new good pair (Lemma~\ref{lem:goodpairexistence}).
\end{enumerate}
\item Pipelining (of good pairs):
\begin{enumerate}
    \item Technical properties, existence of the constants ($C,K,L$, needed by the algorithm) and correctness of collision solving in \texttt{cleanup-runs()} (see Lemma~\ref{lem:run-properties} and \ref{lem:mindist}).
    \item Good pairs on same quasi edge are initiated like pairs of enclosing parentheses (see Lemma~\ref{lem:older-runs}).
    \item Runs of a good pair keep moving towards each other until a merge happens (Lemma~\ref{lem:EM1-run} and Prop\-o\-si\-tion~\ref{prop:quasiedgedestroy}).
    \item Every good pair will have its own merge (see Lem\-ma~\ref{lem:uniquemerge}), so pipelining works as desired.
\end{enumerate}
\end{enumerate}
In Section \ref{sec:runningtime}, we will then prove that \texttt{gathering()} makes use of the above properties in such a way that we get a total linear running time.

\begin{figure}[th]
\centering
    \includegraphics{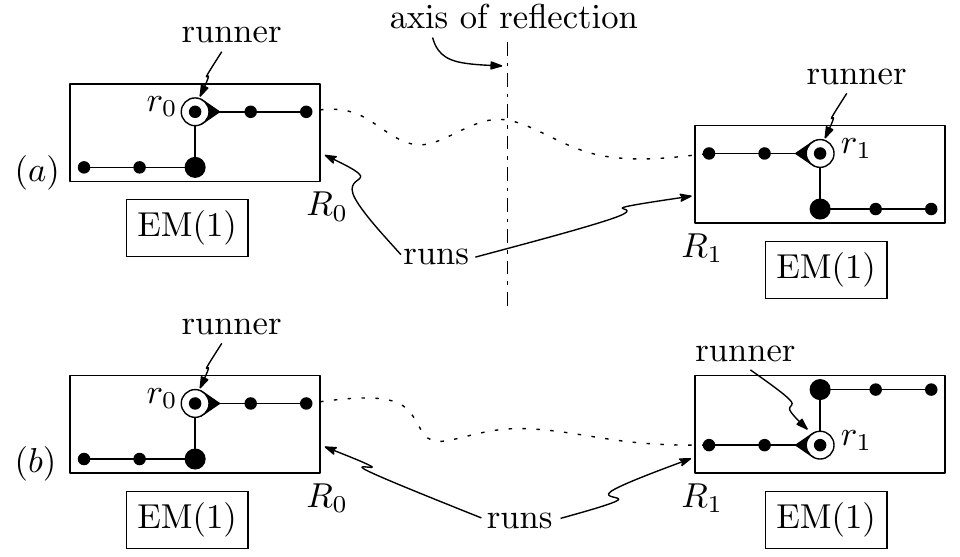}
    \caption{Run-Initialization: A \emph{good pair} of runs will lead to a guaranteed merge. $(a)$: $R_0,R_1$ is a \emph{good pair}. $(b)$: not a good pair.}
    \label{fig:goodpair}
\end{figure}

Lemma~\ref{lem:goodvertexpair} gives us for a mergeless configuration that there exists a pair of two convex or two concave vertex modules, connected only by edge modules.
If this subchain only contains EM(0) and EM(1) modules, i.e., the subchain is a quasi edge, this easily gives the existence of a good pair (cf.~the vertex modules of Figure~\ref{fig:stair} to the run init actions in Table~\ref{table:robot-actions_a}.$A_{\hyperref[action:ALG_runinit_generalized_1]{\ref*{action:runinit_a}-\ref*{action:runinit_c}}}$, $A_{\hyperref[action:ALG_runinit_generalized_2]{\ref*{action:runinit_spec_a}-\ref*{action:runinit_spec_e}}}$ (or to the detailed list in Table~\ref{table:robot-actions_b} in the appendix).

The following lemma gives also the existence if the subchain contains $\EM(h\geq2)$ modules (for the proof see appendix).
\begin{lemma}\label{lem:goodpairexistence}
For a mergeless configuration, on the subchain between two consecutive convex or two consecutive concave vertex modules, which only contains edge modules, always a good pair of runs can be generated.
\end{lemma}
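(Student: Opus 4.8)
The plan is to reduce the general situation to the \emph{quasi edge} case that was already settled in the text just before the lemma, by means of a discrete parity (telescoping) argument over the intermediate runs. First I would fix, using Lemma~\ref{lem:goodvertexpair}, two consecutive vertex modules $V,V'$ of the same type (both convex or both concave) whose connecting subchain consists only of edge modules, and decompose that subchain via Lemma~\ref{lem:mergelessmodules}. Since every edge module has parallel entry and exit lines and its stairway never moves backwards, both the traversal direction and the underlying polygon edge-axis are preserved all the way from $V$ to $V'$; hence every maximal quasi edge (maximal stretch of $\EM(0)$/$\EM(1)$ modules) in this part of the chain lies on one common axis and is traversed in the same sense. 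Because the hypothesis forbids interior vertex modules, the only run-initiating modules strictly between $V$ and $V'$ are the $\EM(h\ge 2)$ modules, and these split the stretch into quasi edges $Q_0,\dots,Q_{k-1}$, each carrying one run pair. If $k=1$ the stretch is a single quasi edge and the claim is exactly the easy case recalled above, so I may assume interior modules are present.

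Next I would encode each initiated run by a single bit $s\in\{A,B\}$ recording the global side of the common axis towards which its diagonal hops are directed; this is well defined along the whole stretch even though the chain may self-intersect, since it refers only to the locally preserved walk orientation and never to a global planar embedding. The key reformulation is that a run pair on a quasi edge is a \emph{good pair} precisely when its two runs carry the same side bit: the reflection over the axis orthogonal to the quasi edge (in the Good-Pair definition) reverses the direction of travel, which is consistent with two runs moving towards each other, while preserving the chosen side. This is exactly the ``both runners hop to the same side'' phenomenon of Figure~\ref{fig:doubleturn}.$(a)$, made local to a single quasi edge. It therefore suffices to exhibit one $Q_i$ whose two bounding runs agree.

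I would then extract two local rules from the finite list of run-initialization actions (Table~\ref{table:robot-actions_a}): (i) the two runs emitted at the two ends of an interior $\EM(h\ge 2)$ module carry \emph{opposite} side bits; and (ii) because $V$ and $V'$ have the same type, the two runs they emit into this stretch carry \emph{equal} side bits, which is the double-turn fact of Figure~\ref{fig:doubleturn}.$(a)$ specialised to the degenerate single-quasi-edge case. With these the argument telescopes. Suppose every $Q_i$ fails to be good; then the two bits bounding each quasi edge differ, contributing one bit-flip per quasi edge, i.e.\ $k$ flips in total, while rule~(i) contributes one further flip inside each of the $k-1$ interior modules. Following the bits from the run that $V$ emits rightwards to the run that $V'$ emits leftwards thus accumulates $2k-1$ flips, an odd number, so the two boundary bits must differ, contradicting rule~(ii). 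Hence at least one $Q_i$ carries a good pair, and (by Lemma~\ref{lem:goodvertexpair}) such a subchain exists, proving the lemma.

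The main obstacle is establishing rules (i) and (ii) rigorously rather than pictorially. One must track how the reflection built into the Good-Pair definition interacts with the fact that the two runs of a pair traverse the chain in opposite directions, and then verify, case by case over the $\EM(h\ge 2)$ shapes of Table~\ref{table:robot-actions_a}, that the emitted side bit genuinely flips across a height-$\ge 2$ edge module irrespective of its height and of which coordinate the stairway begins with. Additional care is needed to keep the side labelling consistent across self-intersections of the polygon, for which I would rely exclusively on the local walk orientation preserved along the entire $V$-to-$V'$ stretch. Once (i) and (ii) are checked, the parity count above closes the proof uniformly in $k$, with the quasi-edge case recovered as the base case $k=1$.
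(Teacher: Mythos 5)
Your proposal is correct and is essentially the paper's own argument: the paper likewise uses the two facts that an interior $\EM(h\geq 2)$ module emits its two runs hopping to \emph{opposite} sides while the two same-type vertex modules emit runs hopping to the \emph{same} side, identifies ``good pair'' with ``same side,'' and then propagates side-sameness inductively from $\VM_1$ across the quasi edges until a good pair is found or $\VM_2$ is reached---your telescoping parity count is just a restatement of that induction. Note that the two rules you flag as the ``main obstacle'' are asserted in the paper without a case-by-case verification either, so your proof is at the same level of rigor as the published one.
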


{\def\arraystretch{0.6}
\begin{table}
\caption{Robot actions $(a)$.}
\label{table:robot-actions_a}
\smallskip
\def\tabularxcolumn#1{m{#1}}
\centering
\begin{tabularx}{\textwidth}{p{1cm} X}
	& \hspace{2.5cm} \textbf{Pattern} \hspace{3.5cm} \textbf{Action} \\
	\hline\hline\\
	\action{action:hop} & \hspace{2.5cm}\includegraphics[scale=0.85]{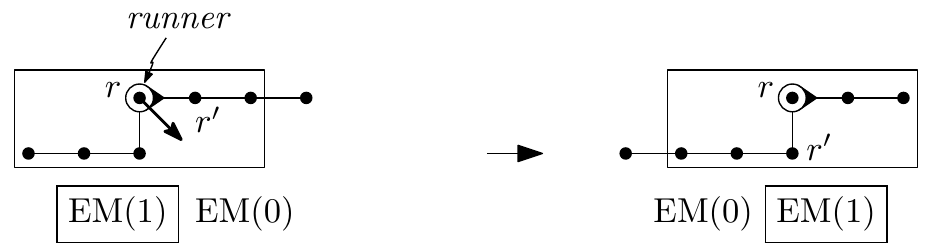} \\
	\\
	& The $\EM(1)$ module of robot $r$ moves one step further by swapping its position with the following $\EM(0)$ module. (Note that $r$ performs a diagonal hop during this movement.)\\
	\hline\\
	\action{action:hop2} & \hspace{2.5cm}\includegraphics[scale=0.85]{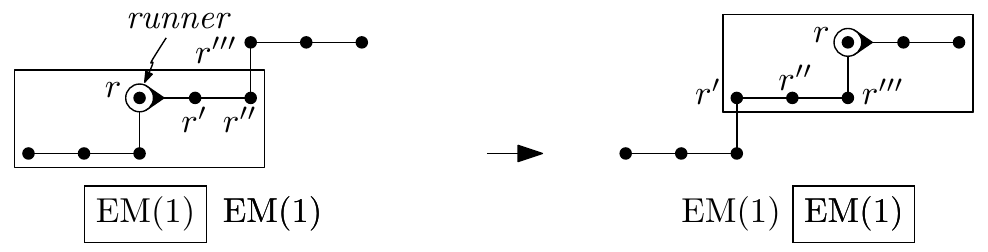} \\
	\\
	& The $\EM(1)$ module of $r$ moves three steps further by swapping its position with the following $\EM(1)$ module. \\
	\hline\\
	\action{action:merge} & \hspace{2.5cm}\includegraphics[scale=0.85]{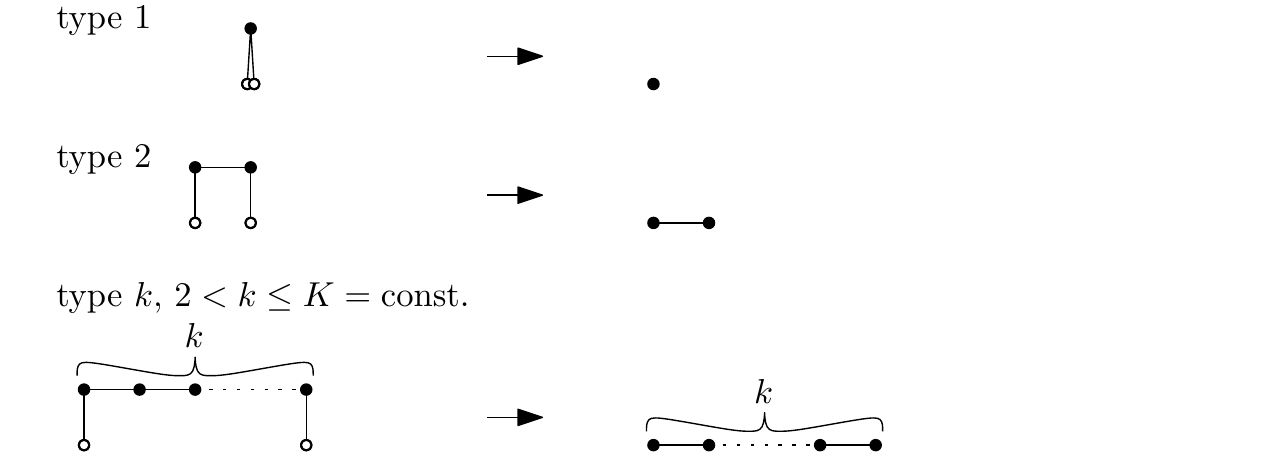} \\
	\\
	& The black robots perform a merge operation of Merge-Type $k$. I.e., they hop downwards and the outmost of them merge with the white robots (Type 1 (special case): all three robots merge to a single one). Thereby, \#robots is always decreased by $2$. (Section~\ref{ssec:runs}, \emph{runs} in \emph{mergeless configurations}: On a quasi edge, $\EM(1)$ modules are replaced by $\EM(0)$ modules.) \\
	\hline\\
	$A_{\ref*{action:runinit_a}-\ref*{action:runinit_c}\label{action:ALG_runinit_generalized_1}}$ & \hspace{2.5cm}\includegraphics[scale=0.85]{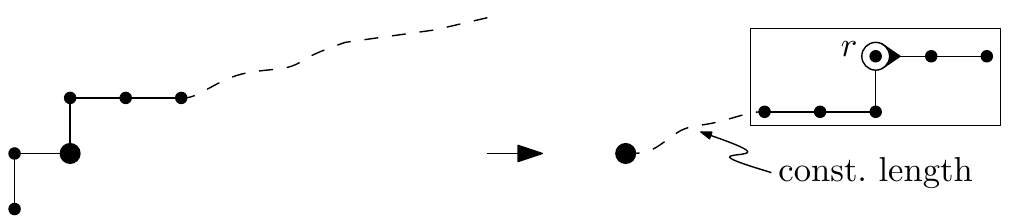} \\
	\\
	& The bold marked robot does not hop. $r$ becomes the new runner. \\
	$A_{\ref*{action:runinit_spec_a}-\ref*{action:runinit_spec_e}\label{action:ALG_runinit_generalized_2}}$ & \hspace{2.5cm}\includegraphics[scale=0.85]{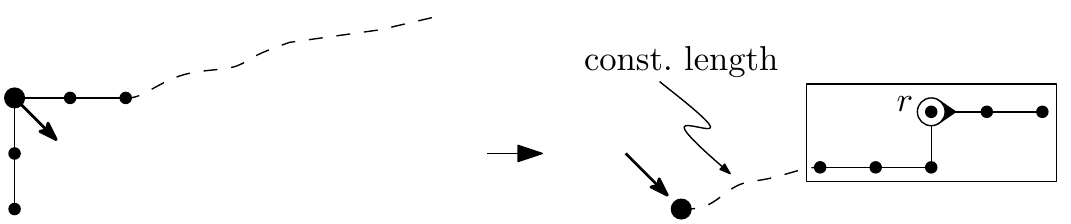} \\
	& The bold marked robot hops. $r$ becomes the new runner. \\
	\hline
\end{tabularx}
\end{table}}

\begin{figure}[h]
\centering
    \includegraphics{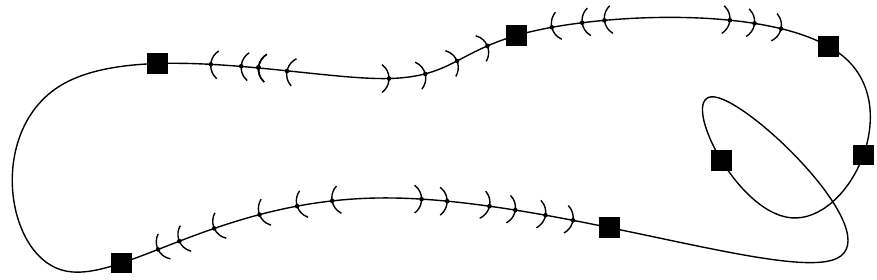}
    \caption{Intuition for pipelining of good pairs. The black boxes symbolize the starting modules of runs.
Parentheses symbolize the good pairs.}
    \label{fig:pipelining_approach}
\end{figure}

We now explain the main algorithmic part of the  gathering algorithm.
In Lemma~\ref{lem:goodpairexistence}, we have shown that good pairs of runs always exist.
In order to achieve a linear running time, we need that every such pair will lead to a merge operation and that every good pair has its own one, i.e., it is
uniquely associable.
This can happen before both partners meet each other (then only one of them is part of the merge) or at the latest when they meet.
Furthermore, we need, that all this also works if we start new good pairs after every constant number of phases.
Technically, this means that the partners of every good pair can move towards each other without colliding with other good pairs that have not yet led to a merge.
This can be assured, if new good pairs fully include older good pairs (or if the subchains between them are disjoint).
I.e., they behave like enclosing parentheses, as shown in Figure~\ref{fig:pipelining_approach}.
We call the whole concept \emph{pipelining}.
In Lemma~\ref{lem:older-runs}, we prove that good pairs are actually generated like enclosing pairs of parentheses.
The technical preliminaries for the correct working of the pipelining are proven by Lemma~\ref{lem:run-properties}.

Next, we deal with some more details about the runs:
Roughly speaking, runs are started at the endpoints of the stairways $\EM(h>1)$ and $\VM(\cdot)$ modules (cf.\ Figure~\ref{fig:stairheight}).
Runs are associated with moving $\EM(1)$ modules.
If a new run is initiated, such a module must be created (if needed, by a local modification of the chain) at the beginning of the quasi edge.
This modification needs some caution, because we do not want to destroy the quasi edge structure (recall that quasi edges can consist of $\EM(0)$ \emph{and} $\EM(1)$ modules.).
Hence, the new run may have to be started some constant distance apart from the beginning of the quasi edge.
The resulting set of actions can be found in Table~\ref{table:robot-actions_a} and more detailed in Table~\ref{table:robot-actions_b} in the appendix:
$A_{\hyperref[action:ALG_runinit_generalized_1]{\ref*{action:runinit_a}-\ref*{action:runinit_c}}}$ for all $\VM(h>0)$ and $\EM(h\geq 2)$, and $A_{\hyperref[action:ALG_runinit_generalized_2]{\ref*{action:runinit_spec_a}-\ref*{action:runinit_spec_e}}}$ for all $\VM(0)$ modules.
The latter requires more cases because else the two runs, simultaneously generated by an $\VM(0)$ module, are started too close to each other.

Each run $R$ carries a timestamp, which stores the phase number of the time when $R$ was started.
This timestamp ensures that runs of good pairs that did not have a merge yet, will survive if they collide with other runs.

During the run initialization it might happen that newly initiated runs overlap with other runs.
In order to prohibit such situations, this is solved immediately after the initialization and prior to the call of \texttt{cleanup-runs()} of the \texttt{initialize-run()} algorithm.

\def \algoinitialize{
\algorithm{initialize-run}
Every robot $r$ does the following:
\begin{itemize}
\item top down checks for all initialization patterns $A_{\hyperref[action:ALG_runinit_generalized_1]{\ref*{action:runinit_a}-\ref*{action:runinit_c}}}$ and $A_{\hyperref[action:ALG_runinit_generalized_2]{\ref*{action:runinit_spec_a}-\ref*{action:runinit_spec_e}}}$ (of Table~\ref{table:robot-actions_a} and \ref{table:robot-actions_b} (appendix), respectively) if it is part of such a pattern and takes the first one that matches. (If none matches then it does nothing.)
\item if necessary, it performs the corresponding hop.
\item if $r$ is the new runner, then it becomes active and stores a current timestamp from the global clock.
\end{itemize}
Then, solve overlappings with other runs: if $r$ is runner of some run $R$, then
\begin{itemize}
\item if $R$ overlaps with a run of same age, then both are stopped.
\item if $R$ overlaps with a run of different age, then the older one is stopped.
\end{itemize}
Finally, a call of \texttt{cleanup-runs()} solves the remaining collisions
(cf.\ Listing~\ref{lst:run-init-subphase} in the appendix.)}
\algoinitialize\label{algo:runinit}
\medskip
\par

The runs move along the chain as follows (cf.\ Algorithm~\ref{algo:move} \texttt{execute-run()}).
As runs are moving $\EM(1)$ modules on a quasi edge and the latter consists only of $\EM(h\leq 1)$ modules, the movement is performed by swapping the position with the next
$\EM(h\leq 1)$ module in moving direction.
This is done as follows:
If the chain looks like the pattern of Action~$A_{\ref{action:hop}}$ (of Table~\ref{table:robot-actions_a}), then the runner first initiates a diagonal hop and then moves one robot further along the chain.
If the chain looks like in the pattern of Action~$A_{\ref{action:hop2}}$, then the runner moves three robots further along the chain (without any diagonal hops) and will wait there for the two following calls of \texttt{execute-run()} for balancing the fluctuations in speed.
In three consecutive calls of \texttt{execute-run()}, each run moves at least three robots further on the chain.
As a result, the runs of a good pair are also moving closer together.

\def \algoexecute{
\algorithm{execute-run}
Every run $R$ does the following:\\
If not pausing (see below), then:
\begin{itemize}
\item If the pattern of Action~$A_{\ref{action:hop}}$ (of Table~\ref{table:robot-actions_a}) matches, then a hop is performed and $R$ moves one robot further.
\item If the pattern of Action~$A_{\ref{action:hop2}}$ (of Table~\ref{table:robot-actions_a}) matches, then $R$ moves three robots further.
(The additional two steps will be balanced by pausing (i.e., not moving) during the following two calls of \texttt{execute-run()}.)
\end{itemize}
For implementing the pausing, every robot decreases a local counter from value $2$ to $0$.
At the end, the \texttt{cleanup-runs()}-algorithm is executed (cf.\ Listing~\ref{lst:module-movement-subphase} in the appendix.).}
\algoexecute\label{algo:move}
\medskip\par

From now on, we say that two runs $R,R'$ are \emph{colliding}, if $\dist{R,R'}\leq C=\mathrm{const.}$ (for details of $C$ see Lemma~\ref{lem:run-properties}.).
In order to prevent new good pairs from colliding with the enclosed ones, we initiate new runs only after every constant number of $L$ phases.
But the other way around, enclosed runs can move towards enclosing good pairs.
For example, in case that only one run of a good pair initiated a merge (and was stopped/removed by it), its partner run is still active and 
can collide with a run of the enclosing good pairs.
The single run has to be stopped.
According to Lemma~\ref{lem:older-runs}, this run is older than the colliding good pair.
This is handled in \texttt{cleanup-runs()} of Algorithm~\ref{algo:cleanup}:
if two runs of different ages collide, it stops the older one.

The \texttt{cleanup-runs()}-algorithm also stops colliding runs which are of the same age if they are not a good pair.
Runs of good pairs are stopped during the merge action (case \ref{enum:cleanup-merge}.\ of \texttt{cleanup-runs()}).
(Note that in case of collision the runs are close enough in order to locally detect whether they are a good pair or not.)
If both partners of a good pair are close enough for participating in a merge, the constant $K$ (cf.\ Lemma~\ref{lem:run-properties}) ensures that then this merge can actually be performed and the participating runs are stopped.
The merge replaces the participating $\EM(1)$ modules by $\EM(0)$ modules.
(Note that in contrast to the merges we have dealt with in the section about merge configurations, this kind of merge does not cause symmetry issues.)
The existence of suitable constant values for $C,K$ and $L$ is proven in Lemma~\ref{lem:run-properties}.
 
\def \algocleanup{
\algorithm{cleanup-runs}
Every run $R$ performs the following:
\begin{enumerate}
\item solve collisions with too close runs $R':$ $R'$ is located in moving direction of $R$ (on the same quasi edge):\label{enum:cleanup-colliding}
\begin{itemize}
\item if both are of different ages, then stop the older one
\item if both are of the same age, then
    \begin{itemize}
    \item if both are moving in the same direction, then stop $R$ (the rear one)
    \item else if both are not a good pair, then stop both (good pairs are handled below)
    \end{itemize}
\end{itemize}
\item perform merges, initiated by good pairs (and stop the participating runs):\label{enum:cleanup-merge}
\begin{itemize}
\item if $R$ is part of a merge module (i.e., one of the merge types $k$ in Table~\ref{table:robot-actions_a}.$A_{\ref{action:merge}}$, then apply the corresponding merge action and stop participating runs
\end{itemize} 
\end{enumerate}
(cf.\ Listing~\ref{lst:cleanup} in the appendix.)
}\algocleanup\label{algo:cleanup}\medskip\par
In this section, our goal is the proof of Lemma~\ref{lem:uniquemerge}.
It proves that the pipelining works as desired---i.e., different good pairs lead to different merges---, which is the basis for the achievement of the linear total running time.

On the basis of the suitable constants $C,K,L$ Lemma~\ref{lem:run-properties} 
also gives some technical properties of runs, which are needed in some of the proofs of the next lemmas (For the proof see appendix.).
\begin{lemma}\label{lem:run-properties}
There exist constant values of $C,L$ and $K$ such that the following holds during the execution of the algorithm:
\begin{enumerate}[(a)]
	\item Runs are node-disjoint.\label{enumitem:disjoint-runs}
    \item Two consecutive runs on the same quasi edge that are running in the same direction are initiated this way that they do not collide.\label{enumitem:consec-runs-collisions}
	\item Two consecutive runs that are running in the same direction cannot be removed by the same merge.\label{enumitem:consec-runs-merges}
	\item A run on a quasi edge is not merged from behind (in terms of its moving direction).\label{enumitem:behind-run-merges}
\end{enumerate}
\end{lemma}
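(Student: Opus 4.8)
The plan is to fix the three constants in the dependency order $K \to C \to L$ and then establish the four statements as invariants maintained throughout the execution, by induction over the phases. First I would extract from \texttt{execute-run()} (Algorithm~\ref{algo:move}) the movement law of a run: by Action~$A_{\ref{action:hop}}$ a runner advances one robot, while by Action~$A_{\ref{action:hop2}}$ it advances three robots and then pauses for two phases. Consequently, in every window of three consecutive phases a run advances by at least three robots and in a single phase by at most three, so after $t$ phases the displacement of a runner along its quasi edge differs from $t$ by at most a fixed additive constant and is strictly monotone. I would take $K$ to be the maximum number of robots occurring in any merge pattern or run-initialization pattern of Tables~\ref{table:robot-actions_a} and~\ref{table:robot-actions_b}; this is a constant since there are constantly many such patterns. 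I then choose $C$ larger than $K$ and large enough that $\dist{R,R'}\le C$ forces both runners into each other's (constant) viewing range, so that a merge of a good pair can be detected and executed locally. Finally I choose $L$ larger than $C$ plus the bounded fluctuation of the movement law plus the bounded spatial offset at which \texttt{initialize-run()} (Algorithm~\ref{algo:runinit}) may place a new runner away from a module endpoint.

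For part~(\ref{enumitem:disjoint-runs}) I would argue by induction on phases. Right after \texttt{initialize-run()}, its explicit overlap resolution (same-age overlapping runs are both stopped, and of differently-aged overlapping runs the older is stopped) leaves no two surviving runs sharing a node; during \texttt{execute-run()}, \texttt{cleanup-runs()} (Algorithm~\ref{algo:cleanup}) stops a run as soon as its distance to another active run would drop to $C$, and since $C\ge 1$ no two active runs ever reach distance $0$. Part~(\ref{enumitem:consec-runs-collisions}) is the timing statement and is where $L$ is used. Two runs running in the same direction on one quasi edge are never a good pair, and I would observe that such a consecutive same-direction pair must originate from two distinct initialization rounds, hence at least $L$ phases apart, since a single round on a quasi edge starts only the two oppositely-moving runners of one good pair at its two endpoints. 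I would then show that the pair is born with separation exceeding $C$---guaranteed by choosing $L$ large relative to $C$ together with the enclosing placement of \texttt{initialize-run()}, which starts the later run strictly inside the region already vacated by the earlier one---and that, moving in the same direction under the same movement law at the same average speed, this separation is preserved up to the bounded fluctuation already absorbed into $L$. Hence they never collide.

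Part~(\ref{enumitem:consec-runs-merges}) then follows by combining~(\ref{enumitem:consec-runs-collisions}) with $C>K$: a single merge action only consumes robots inside one merge module, which spans a contiguous subchain of at most $K$ robots, whereas two consecutive same-direction runs are more than $C>K$ apart and hence cannot both meet the same merge module. Part~(\ref{enumitem:behind-run-merges}) is the main obstacle, because it is geometric rather than purely metric. Here I would first show that the subchain immediately behind a runner (opposite to its moving direction) is, by the hop actions $A_{\ref{action:hop}},A_{\ref{action:hop2}}$ on a quasi edge, a flat $\EM(0)$ stretch, and that a flat stretch matches none of the black-robot merge patterns of Table~\ref{table:robot-actions_a}.$A_{\ref{action:merge}}$; hence a run's own trailing context never triggers a merge at its rear. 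It then remains to exclude a rear merge caused by another run: a same-direction run cannot catch up by~(\ref{enumitem:consec-runs-collisions}), an opposite-direction run located behind moves away, and the only genuinely dangerous case---an enclosed leftover run being overtaken by an enclosing good pair---is resolved by the age-based rule of \texttt{cleanup-runs()}, which stops the rear run before it comes within merge distance rather than letting it be merged. Verifying this last case rigorously, namely that \texttt{cleanup-runs()} always intervenes at distance $>K$ (again using $C>K$) and that no concurrent merge of the enclosing pair can reach backward across the stopped run, is the delicate part I expect to require the most care.
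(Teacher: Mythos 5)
Your scaffolding---fixing the constants in the order $K$, then $C$, then $L$, and verifying the four items as invariants---is the same as the paper's, and your treatments of (\ref{enumitem:consec-runs-collisions}) and (\ref{enumitem:consec-runs-merges}) essentially coincide with it: birth separation forced above $C$ by $L>C+\mathrm{const.}$ together with the bounded speed fluctuation for (\ref{enumitem:consec-runs-collisions}), and ``merge modules span at most $K$ robots while non-colliding runs are farther apart'' for (\ref{enumitem:consec-runs-merges}) (the paper states this as $C+4>K-1$, which your stronger choice $C>K$ satisfies). However, your argument for (\ref{enumitem:disjoint-runs}) misstates what \texttt{cleanup-runs()} does: it does \emph{not} stop every pair of runs whose distance drops to $C$; good pairs are explicitly exempted from the collision rule and are allowed to keep approaching each other. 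For a good pair, node-disjointness holds only because the merge pattern of Table~\ref{table:robot-actions_a}.$A_{\ref{action:merge}}$ fires before the two runs overlap, which is exactly why the paper chooses $K$ ``large enough to enable their merge early enough'' (and notes this choice is independent of $C$). Your constants paragraph gestures at this, but your induction for (\ref{enumitem:disjoint-runs}) as written fails in precisely the case that matters most.

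The more serious gap is in (\ref{enumitem:behind-run-merges}). Your key claim---that the subchain immediately behind a runner is a flat $\EM(0)$ stretch---is false at the critical moment: immediately after \texttt{initialize-run()}, what lies behind a new run is the vertex module or $\EM(h\geq 2)$ stairway that spawned it, and it is this birthplace structure together with the run that may match a merge pattern of some type $k\leq K$ at the next call of \texttt{merge()}. (The claim is also false later in a milder way, since $A_{\ref{action:hop2}}$ swaps the run past an $\EM(1)$ module, which then lies behind it; those have the same orientation as the run and are harmless, but they are not ``flat''.) The paper's proof of (\ref{enumitem:behind-run-merges}) is about exactly this birthplace scenario and resolves it by \emph{timing}, not by geometry of the trail: since \texttt{initialize-run()} is executed right after \texttt{merge()}, a newly born run gets $L$ executions of \texttt{execute-run()} before it ever faces \texttt{merge()} again, by which time it has moved more than $K$ robots away from its birthplace, so no merge pattern can span both; this is where the second constraint $L>K+\mathrm{const.}$ comes from. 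Your numerical choices ($L>C>K$) would satisfy this inequality, but your proof never makes the argument, so as written item (\ref{enumitem:behind-run-merges}) is unproven; the cases you do treat (rear merges caused by other runs) are the easier ones and are handled as in the paper via (\ref{enumitem:consec-runs-collisions}), (\ref{enumitem:consec-runs-merges}) and the age rule of \texttt{cleanup-runs()}.
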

\begin{proposition}\label{prop:quasiedgedestroy}
Neither merges nor run executions can locally destroy the quasi edge property.
\end{proposition}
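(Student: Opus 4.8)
The plan is to prove Proposition~\ref{prop:quasiedgedestroy} by a local case analysis, establishing the invariant that the quasi edge property is preserved under both kinds of local modifications that can occur during a phase: (i) a merge action (Table~\ref{table:robot-actions_a}.$A_{\ref{action:merge}}$, including the overlapping variants of Table~\ref{table:robot-actions_c}) applied on a quasi edge, and (ii) a run execution step (the hop actions $A_{\ref{action:hop}}$ and $A_{\ref{action:hop2}}$). Recall that a quasi edge is a subchain consisting only of $\EM(0)$ and $\EM(1)$ modules, and that a run is itself a moving $\EM(1)$ module. Since all three operations are defined purely in terms of a constant-size local pattern, it suffices to check that after applying each action to its matching pattern, the affected robots still constitute an admissible gluing of $\EM(0)$ and $\EM(1)$ modules, i.e.\ that no new vertex module ($\VM$) or higher edge module ($\EM(h\geq 2)$) is created at the boundary between the modified region and the untouched remainder of the chain.

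First I would handle the run executions, which are the easier case. By the description of $A_{\ref{action:hop}}$, an $\EM(1)$ module swaps its position with the following $\EM(0)$ module, and by $A_{\ref{action:hop2}}$ it swaps with a following $\EM(1)$ module. In both cases the multiset of modules on the quasi edge is unchanged---the runner simply trades places with an $\EM(\leq 1)$ neighbor---so the subchain still consists exclusively of $\EM(0)$ and $\EM(1)$ modules. The only thing requiring verification is that the three collinear ``white'' robots at each module boundary remain collinear after the diagonal hop, so that the gluing condition at the seam with the rest of the chain is not violated; this follows directly from the fact, already asserted in Section~\ref{ssec:runs}, that a run keeps its orientation and local shape while moving. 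Hence I would confirm, by inspecting the two hop patterns, that the endpoints of the moved module realign with collinear triples and conclude that run execution preserves the quasi edge property.

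Second I would treat the merge actions. The key observation, already stated in the caption of $A_{\ref{action:merge}}$, is that on a quasi edge a merge \emph{replaces $\EM(1)$ modules by $\EM(0)$ modules}. Thus I would argue that the merge pattern on a quasi edge acts locally by deleting the two ``bump'' robots that distinguish an $\EM(1)$ from an $\EM(0)$ (decreasing the robot count by two, as the action guarantees), flattening the local stairway height from $1$ to $0$ without introducing any turn. I would verify case by case over the merge types $k\leq K$, and over the two overlap types of Figure~\ref{fig:overlap} (with the boundary treatments $A_{\ref{action:overlap_a}}$, $A_{\ref{action:overlap_b}}$), that the resulting robots are collinear with their unaffected neighbors, so that again no $\VM$ or $\EM(h\geq2)$ appears. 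Since the number of vertical and horizontal steps in the flattened region still differs by exactly one (an edge, not a vertex), the module remains an edge module of height at most $1$.

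The main obstacle I expect is the boundary between the locally modified region and the rest of the chain: a merge or hop could in principle interact with the collinear white triple that glues the quasi edge to an adjacent module, turning what was an $\EM(1)$ endpoint into the start of a $\VM$ or a taller stairway. To close this, I would rely on the constant $K$ (Lemma~\ref{lem:run-properties}) bounding the size of merge modules and on Lemma~\ref{lem:run-properties}(\ref{enumitem:disjoint-runs}) and (\ref{enumitem:behind-run-merges}) guaranteeing that runs are node-disjoint and not merged from behind, so that the modified pattern never reaches past the collinear triple into a neighboring module; the seam robots stay fixed and collinear. I would therefore phrase the final argument as: each action touches only the interior of a quasi edge and leaves its bounding collinear triples intact, so the decomposition into $\EM(0)$ and $\EM(1)$ modules is preserved locally, which is exactly the claim.
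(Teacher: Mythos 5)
The paper never proves this proposition: it is stated bare, both in the main text and in the appendix (where every surrounding lemma does get a proof), and is implicitly treated as evident from inspection of the action tables. So your attempt is filling a gap rather than paralleling an existing argument, and its core is the right one --- indeed essentially the only one available: the hop actions $A_{\ref{action:hop}}$ and $A_{\ref{action:hop2}}$ merely swap an $\EM(1)$ module with a following $\EM(0)$ or $\EM(1)$ module, and a merge on a quasi edge replaces $\EM(1)$ modules by $\EM(0)$ modules (exactly as the caption of $A_{\ref{action:merge}}$ asserts), so after either operation the affected subchain still decomposes into $\EM(h\leq 1)$ modules; checking that each pattern leaves its bounding collinear triples fixed is the correct way to handle the seam with the rest of the chain.

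The one real flaw is structural: you must not invoke Lemma~\ref{lem:run-properties} in this proof. In the paper the dependency runs the other way --- the proof of Lemma~\ref{lem:run-properties}(\ref{enumitem:consec-runs-collisions}) explicitly cites Proposition~\ref{prop:quasiedgedestroy} (``After initialization of a run on a quasi edge, this run leaves a quasi edge behind''), and Lemmas~\ref{lem:older-runs} and~\ref{lem:EM1-run} also rest on the proposition. Citing the lemma here therefore threatens circularity; your argument survives only by the accident that the two parts you use, (\ref{enumitem:disjoint-runs}) and (\ref{enumitem:behind-run-merges}), have proofs that do not themselves appeal to the proposition, and you would at minimum have to say so explicitly. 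The cleaner repair is to drop the lemma altogether: every merge and hop pattern in Table~\ref{table:robot-actions_a} begins and ends with a collinear triple of robots that the action does not move, so each action is confined to the interior of the quasi edge and cannot reach across a seam into a neighboring $\VM$ or $\EM(h\geq 2)$ module --- no choice of the constants $C$, $K$, $L$ is needed for this. With that substitution your proof is self-contained, correct, and respects the paper's order of dependencies.
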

In order to ensure that good pairs lead to a merge, the first thing we have to ensure is that \texttt{cleanup-runs()} does not stop them before.
This is proven in the following two lemmas.
Lemma~\ref{lem:older-runs} additionally shows that good pairs are actually generated like enclosing pairs of parentheses as symbolically depicted in Figure~\ref{fig:pipelining_approach}
(for proof of Lemma~\ref{lem:older-runs} and \ref{lem:mindist}, see appendix.).

\begin{lemma}\label{lem:older-runs}
Runs between a good pair $P=\{R,R'\}$ are older than $P$.
In particular, if $P'=\{S,S'\}$ is a good pair which is older than $P$, $C \coloneqq \subchain{R,R'}$ and $C' \coloneqq \subchain{S,S'}$, then
$C$ and $C'$ can only intersect such that either $C\cap C' = \emptyset$ or $C\cap C' = C'$.
\end{lemma}

\begin{lemma}\label{lem:mindist}
The algorithm \texttt{cleanup-runs()} does not stop any run of a good pair before one of them is part of a merge.
\end{lemma}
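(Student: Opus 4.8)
The plan is to argue by contradiction: assume \texttt{cleanup-runs()} stops some run of a good pair $P=\{R,R'\}$ before either $R$ or $R'$ has participated in a merge, and look at the first phase at which this happens. I would first recall that Algorithm~\ref{algo:cleanup} can stop a run only in case~\ref{enum:cleanup-colliding}, and only through one of three rules: a collision with a run of different age (the older one is stopped), a collision with a same-age run moving in the same direction (the rear one is stopped), or a collision with a same-age run moving in the opposite direction that is \emph{not} a good pair (both are stopped). The merges of case~\ref{enum:cleanup-merge} are exactly the event the lemma permits, so it suffices to show that none of the three stopping rules applies to $R$; the argument for $R'$ is symmetric.

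The first reduction uses that all runs advance at the same speed (three robots per three calls of \texttt{execute-run()}). Hence any two runs moving in the same direction keep a constant distance, and by Lemma~\ref{lem:run-properties}(b) consecutive same-direction runs are initiated at distance larger than $C$; therefore two same-direction runs never collide, and the ``same direction, stop the rear one'' rule never fires. This leaves only collisions between runs moving towards each other.

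Next I would pin down the geometry around $R$. Since $R$ moves towards its partner $R'$, any run lying in the moving direction of $R$ on the same quasi edge and within distance $C$ lies in $\subchain{R,R'}$ or is $R'$ itself (before the merge, $R'$ is still farther than $C$ away, so nothing beyond $R'$ is within reach). By Lemma~\ref{lem:older-runs} every run strictly between $R$ and $R'$ is strictly older than $P$, and good pairs are nested or have disjoint subchains; a run of a disjoint pair adjacent to $R$ recedes from $R$, while the enclosing-parentheses ordering of the enclosed pairs forces the run immediately ahead of $R$ to move in the \emph{same} direction as $R$. Combined with the constant-distance observation above, $R$ can never come within distance $C$ of this shielding run, hence never reaches any deeper live run while that run's pair is still intact. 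Consequently the only same-age opposite-direction run $R$ can ever meet is $R'$ itself, and such a meeting triggers the good-pair merge of case~\ref{enum:cleanup-merge} (the constant $K$ of Lemma~\ref{lem:run-properties} guarantees the merge is applicable once they are within $C$). This is the permitted event, not a stop, so the ``same age, not a good pair, stop both'' rule never applies to $R$ before $P$ merges.

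It remains to rule out the ``different ages, stop the older'' rule. Here I would show that $R$ is always the \emph{younger} run in such a collision: a run $W$ that reaches $R$ from its moving direction while moving oppositely must be strictly older than $R$ by Lemma~\ref{lem:older-runs}, so $W$ is stopped and $R$ survives. The only way such a $W$ can arise at all is as a leftover single run, produced when one partner of an older, enclosed good pair has already entered a merge; but then that pair already satisfies ``one of them is part of a merge'', so discarding $W$ is not a premature stop and does not contradict the lemma. Using Proposition~\ref{prop:quasiedgedestroy} to guarantee that the quasi-edge structure (and hence the directed movement) is preserved throughout, we conclude that no stopping rule removes $R$ or $R'$ before $P$ merges. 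I expect the main obstacle to be making the shielding/timing step fully rigorous: proving from equal speed, Lemma~\ref{lem:run-properties}(b), and the nesting of Lemma~\ref{lem:older-runs} that every enclosed good pair completes its own merge while its live runs stay farther than $C$ from the enclosing runs, so that the only runs ever reaching $R$ are older leftovers that the age rule legitimately discards.
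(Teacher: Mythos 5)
Your overall skeleton (first-violation induction, case analysis over the three stopping rules of \texttt{cleanup-runs()}, locating partner runs via Lemma~\ref{lem:older-runs}) matches the paper's proof, and your treatment of opposite-direction collisions is essentially the paper's first case. The genuine gap is your ``first reduction'': the claim that two runs moving in the same direction keep constant distance and therefore never collide is unjustified and in fact false in this setting. Equal average speed does not give constant separation: the pausing in \texttt{execute-run()} lets the gap fluctuate (the paper notes by up to four), and, decisively, every merge performed by a good pair nested \emph{between} two same-direction runs removes two robots from the subchain separating them, so a trailing run does gain ground on a leading one over time. Moreover, Lemma~\ref{lem:run-properties}(b) cannot be invoked for arbitrary same-direction pairs; it only concerns consecutive runs launched successively from the same initiation point of a quasi edge. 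A concrete scenario your argument does not exclude: a young run $R$ of a pair $\{R,R'\}$ trailing the near run $S$ of an older pair $\{S,S'\}$ nested inside $\subchain{R,R'}$, with still older pairs nested between $R$ and $S$ whose merges shrink $\dist{R,S}$; if $R$ ever comes within distance $C$ of $S$, the different-age rule stops $S$ --- a run of a still-active good pair --- which is exactly what the lemma forbids. Your ``shielding'' argument for the opposite-direction case silently rests on the same false constant-distance premise, so it does not substitute for this analysis.

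This same-direction case is precisely where the paper's proof does its hardest work: it splits same-direction collisions into three subcases by age, and for ``younger follower, older leader belonging to an active good pair'' it argues from the constants $C,K,L$ (Lemma~\ref{lem:run-properties}) and the fact that the leader leaves a quasi edge behind itself that such a collision can never arise early enough; for the same-age subcase it uses the quasi-edge and nesting structure (Lemma~\ref{lem:older-runs}) to show the stopped follower cannot belong to an active pair; and it explicitly acknowledges that merges shorten the subchain between same-direction runs (``the merge which would have shortened the subchain between both too much, would have removed one of them''). To repair your proof you must restore this case analysis rather than dismiss same-direction collisions outright.
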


From now on, we can assume that all good pairs fulfill the properties as stated in Lemma~\ref{lem:run-properties}.
When introducing \texttt{execute-run()}, we have already argued that it takes at most three executions until \texttt{execute-run()} has moved the partners of a good pair closer to each other.
Using this, the technical run properties of Lemma~\ref{lem:run-properties} and the correct working of \texttt{cleanup-runs()} (Lemma~\ref{lem:mindist}),
we can prove that our algorithm ensures that every good pair $R,R'$ leads to a merge and that this merge will happen on the $\subchain{R,R'}$.
(The latter is important for ensuring that this merge does not also belong to a different good pair (Lem\-ma~\ref{lem:uniquemerge}).)
\begin{lemma}\label{lem:EM1-run}
The two runs $R,R'$ of a good pair move towards each other until at least one of them participates in a merge, which is completely included in the $\subchain{R,R'}$.
\end{lemma}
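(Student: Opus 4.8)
The plan is to combine the movement guarantee for runs on a quasi edge with the non-interference results already established, and then to cash in the reflection symmetry of a good pair to force a merge between the two partners. Concretely, I would prove three things in turn: (i) $R$ and $R'$ keep approaching each other, (ii) neither can be stopped before a merge occurs, so the approach must terminate in a merge, and (iii) the resulting merge module lies entirely on $\subchain{R,R'}$.

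For (i), recall that a run is a moving $\EM(1)$ module and that $\subchain{R,R'}$ is a quasi edge, i.e., it consists only of $\EM(0)$ and $\EM(1)$ modules. On such a subchain one of the patterns $A_{\ref{action:hop}}$ or $A_{\ref{action:hop2}}$ always matches, so each run can always advance, and by the speed-balancing argument accompanying \texttt{execute-run()} each run moves at least three robots forward in every three consecutive phases. By Proposition~\ref{prop:quasiedgedestroy}, neither a run execution nor a merge performed by the enclosed (by Lemma~\ref{lem:older-runs}, older) runs destroys the quasi edge property, so the stretch between $R$ and $R'$ remains a quasi edge throughout and both partners can keep moving inward. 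Hence $\dist{R,R'}$ decreases strictly and monotonically as long as no merge has occurred.

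For (ii), the only mechanism that could halt a partner of a good pair prematurely is \texttt{cleanup-runs()}; but Lemma~\ref{lem:mindist} states exactly that \texttt{cleanup-runs()} does not stop a run of a good pair before one of them participates in a merge. Combined with (i), this means the monotone decrease of $\dist{R,R'}$ cannot be interrupted, so the two runs must eventually come within any fixed constant distance. It then remains to show that reaching a small constant distance forces a merge, and here the good-pair hypothesis is decisive. Since $R'$ is a reflection of $R$ across an axis orthogonal to the quasi edge, the two $\EM(1)$ bumps they carry are oriented so that their diagonal hops go to the globally same side of the edge (cf.\ Figure~\ref{fig:doubleturn} and Figure~\ref{fig:goodpair}.$(a)$); when the runners are close enough, the local shape around them coincides with one of the finitely many merge patterns of Table~\ref{table:robot-actions_a}.$A_{\ref{action:merge}}$. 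The constant $K$ from Lemma~\ref{lem:run-properties} is chosen large enough that this pattern fits inside the viewing range, so the involved robots detect it and the merge is executed.

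Finally, for (iii) I would argue containment using property~(\ref{enumitem:behind-run-merges}) of Lemma~\ref{lem:run-properties}: a run on a quasi edge is never merged from behind. When the merge fires, $\dist{R,R'}$ is at most a constant bounded by $K$, so the matched merge module is a contiguous block of robots reaching from the neighborhood of $R$'s runner to that of $R'$'s runner. Since neither run is merged from behind, no robot lying outside $\subchain{R,R'}$ on either side is consumed, and therefore the whole merge module is contained in $\subchain{R,R'}$, which is precisely the claim. I expect the main obstacle to be the second half of part (ii): verifying rigorously that the reflection symmetry of a good pair, together with the ``same global side'' orientation of the diagonal hops, always produces a configuration matching one of the merge types once the partners are within the constant detection distance. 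This is a finite but careful case analysis over the possible local shapes of the two $\EM(1)$ runs, and it is where the precise value of $K$ must be pinned down.
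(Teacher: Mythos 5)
Your proposal is correct and follows essentially the same route as the paper's proof: quasi-edge preservation via Proposition~\ref{prop:quasiedgedestroy}, the hop actions $A_{\ref{action:hop}}$/$A_{\ref{action:hop2}}$ for the steady approach, Lemma~\ref{lem:mindist} to rule out premature stopping by \texttt{cleanup-runs()}, the reflection symmetry of the good pair forcing a merge pattern of $A_{\ref{action:merge}}$ once the partners are close, and Lemma~\ref{lem:run-properties} for containment of the merge in $\subchain{R,R'}$. The only cosmetic differences are that the paper additionally cites property~(\ref{enumitem:consec-runs-merges}) alongside~(\ref{enumitem:behind-run-merges}) in the containment step, and it explicitly names the third case in which the module ahead of a run is a reflected $\EM(1)$ module (so a merge action, not a hop, applies) instead of asserting that one of the hop patterns always matches.
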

(For the proof of Lemma~\ref{lem:EM1-run}, see the appendix.)

Speeding up the gathering of the chain to a linear running time, we need that different good pairs lead to different merges.
This is proven by Lemma~\ref{lem:uniquemerge}.

\begin{figure}[t]
\centering
    \includegraphics{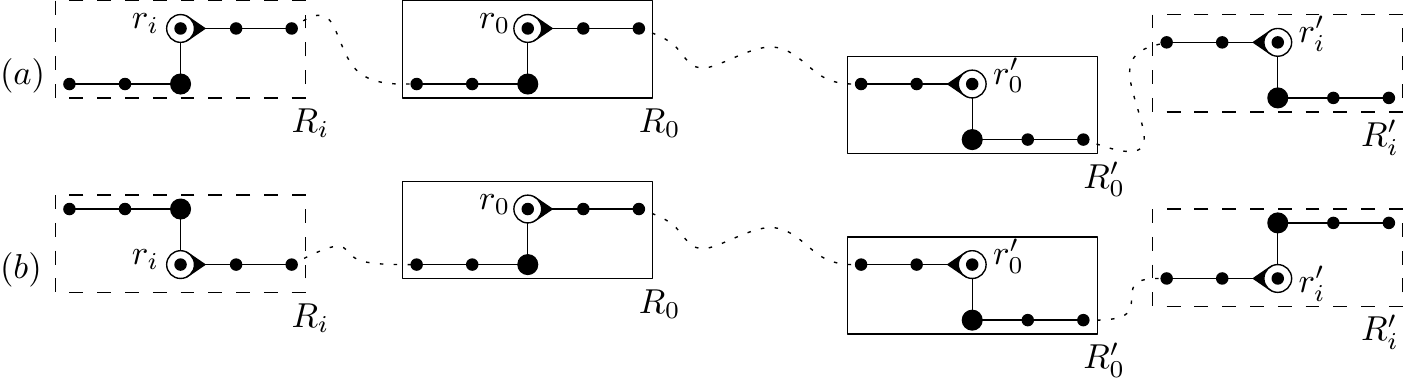}
    \caption{Every pipelined good pair has its own merge.}
    \label{fig:manhattanmerge}
\end{figure}
\begin{lemma}\label{lem:uniquemerge}
Every good pair $R,R'$ leads to a merge.
This merge can be uniquely associated with this pair and is completely included in the $\subchain{R,R'}$.
\end{lemma}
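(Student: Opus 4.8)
The clauses ``leads to a merge'' and ``completely included in the $\subchain{R,R'}$'' are supplied verbatim by Lemma~\ref{lem:EM1-run}, so the genuinely new content is the word \emph{uniquely}: I read the statement as the injectivity of the map sending a good pair $P=\{R,R'\}$ to the merge $m(P)$ it triggers. The first step is to strengthen the input from Lemma~\ref{lem:EM1-run} to the fact that $m(P)$ not only lies in $\subchain{R,R'}$ but actually consumes at least one of the two runs $R,R'$ of $P$ itself. Injectivity then reduces to a single claim: \emph{no merge consumes runs of two different good pairs}. Granting this, $m(P)=m(P')$ forces the common merge to consume a run of $P$ and a run of $P'$, and since every run belongs to exactly one good pair this yields $P=P'$.

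I would prove the claim by reading off the collision rules of \texttt{cleanup-runs()}, using that the constants of Lemma~\ref{lem:run-properties} are chosen so that any two runs close enough to lie jointly inside one merge module (of size at most $K$) are already flagged as colliding. Two runs removed by one and the same merge must meet, i.e.\ travel in opposite directions: by property~\ref{enumitem:consec-runs-merges} of Lemma~\ref{lem:run-properties} two runs moving in the same direction are never removed together, and by property~\ref{enumitem:behind-run-merges} a run is never merged from behind. Now \texttt{cleanup-runs()} resolves collisions \emph{before} it performs merges, and for two colliding runs it stops one of them unless they form a genuine good pair --- if their ages differ it stops the older one, and if they share an age without being a good pair it stops both. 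Hence the only configuration that survives the collision step and is jointly consumed by the merge step is a pair of equal age that the robots locally certify to be a good pair, i.e.\ the two runs of one good pair; node-disjointness (property~\ref{enumitem:disjoint-runs} of Lemma~\ref{lem:run-properties}) guarantees these are two distinct runs. This is the claim.

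It remains to check that in the dangerous nested configurations the ages really differ. Here I invoke the parenthesis structure of Lemma~\ref{lem:older-runs}: whenever the subchains of two good pairs are nested, the inner pair is strictly older, so along the enclosing quasi edge every run of an inner pair is older than the boundary runs of the outer pair. Thus, when a boundary run of the outer pair catches up with an oppositely moving inner run, the collision step stops the older (inner) run instead of merging it, while Lemma~\ref{lem:mindist} guarantees that the younger good-pair run is never the one stopped; Proposition~\ref{prop:quasiedgedestroy} keeps the quasi edge intact through each such event, so the boundary run proceeds cleanly toward its true partner. Together with the containment already noted, this shows that $m(P)$ is attached to $P$ and to no other good pair, which is the asserted unique association (cf.\ Figure~\ref{fig:manhattanmerge}).

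The step I expect to be the main obstacle is precisely the clause ``at least one of them'' in Lemma~\ref{lem:EM1-run}: a good pair may merge \emph{before} its two partners meet, leaving one partner as an isolated run that then drifts into an enclosing good pair. This is the configuration that most plausibly double-books a merge, and ruling it out is exactly where the age ordering of Lemma~\ref{lem:older-runs}, the ``stop the older run'' rule of \texttt{cleanup-runs()}, and the ordering of the collision step before the merge step must all be combined: the isolated run, being strictly older than the enclosing pair it meets, is stopped rather than co-merged, so it is charged back to the earlier merge of its own pair and never contaminates the merge of the younger enclosing pair.
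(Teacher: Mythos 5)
Your proof is correct and reaches the statement by a genuinely different route than the paper. Both proofs take existence and the containment in $\subchain{R,R'}$ verbatim from Lemma~\ref{lem:EM1-run}; the difference is how uniqueness is obtained. The paper argues by induction over the parenthesis nesting of good pairs on a quasi edge (cf.\ Figure~\ref{fig:manhattanmerge}): the innermost pair $R_0,R_0'$ gets its merge from Lemma~\ref{lem:EM1-run}, and for each enclosing pair $R_i,R_i'$ the paper invokes Lemma~\ref{lem:run-properties}.(\ref{enumitem:consec-runs-merges}) to conclude that no run of an enclosed pair participates in the merge of $R_i,R_i'$, so each nesting level receives its own merge. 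You instead prove one global claim---no merge ever consumes active runs of two distinct good pairs---from which injectivity of $P\mapsto m(P)$ is immediate, and you prove that claim by combining properties (\ref{enumitem:consec-runs-merges}) and (\ref{enumitem:behind-run-merges}) (co-merged runs must approach each other) with the fact that the collision step of \texttt{cleanup-runs()} precedes its merge step. This buys you something the paper's text does not spell out: property (\ref{enumitem:consec-runs-merges}) only concerns runs moving in the \emph{same} direction, so the paper's induction step silently skips exactly the case you single out as the main obstacle, namely a leftover partner of an older, already-merged inner pair drifting in the \emph{opposite} direction toward a run of the enclosing pair; your explicit combination of Lemma~\ref{lem:older-runs}, Lemma~\ref{lem:mindist} and the ``stop the older run'' rule is what closes that case. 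Your framing also handles cleanly the situation where both runs of one pair end up in (different) merges, since the association ``a merge involving a run of $P$'' can never point to two pairs. One caveat: your claim's proof rests on the assertion that two runs lying jointly inside one merge module (size at most $K$) are already at collision distance at most $C$; this is not part of statements (a)--(d) of Lemma~\ref{lem:run-properties}, but only of the intended choice of constants in its proof (there: $C+4>K-1$), so you should present it as an explicit assumption on the constants rather than cite that lemma for it.
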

\begin{proof}
Cf.\ Figure~\ref{fig:manhattanmerge}.$(a)$.
We assume that the $\EM(1)$ modules in this figure are located on the same arbitrary quasi edge and consider the innermost good pair $R_0,R_0'$.
Because of Lemma~\ref{lem:EM1-run}, this pair will lead to a merge operation completely included in the $\subchain{R_0,R_0'}$.
We can associate this merge with $R_0,R_0'$, since by Lemma~\ref{lem:run-properties}.(\ref{enumitem:consec-runs-merges}) none of the outer runs can be involved.

We now inductively continue this idea:
Let $R_i,R_i'$ be one of the other good pairs on the current quasi edge, which has been initiated later and completely includes $R_0,R_0'$.
We assume $R_{i-1},R_{i-1}'$ being the outermost good pair for which holds $\subchain{R_{i-1},R_{i-1}'}\subset\subchain{R_i,R_i'}$.
(\OBdA we assume that $R_i$ is located on the same side as $R_{i-1}$.)
As the induction hypothesis we assume that the lemma holds for $R_{i-1},R_{i-1}'$.
Because of Lemma~\ref{lem:EM1-run}, $R_i$ or $R_i'$ will be part of a merge, completely included in the $\subchain{R_i,R_i'}$.
If this happens, Lemma~\ref{lem:run-properties}.(\ref{enumitem:consec-runs-merges}) ensures that none of the runs of the included good pairs can be also part of it.
Hence, this merge can be uniquely associated with $R_i,R_i'$.
\end{proof}

\section{Running Time}\label{sec:runningtime}
Now, we prove that the main Algorithm \ref{algo:gathering} \texttt{gathering()} makes use of the above properties in such a
way that we get a total linear running time.

\begin{lemma}\label{lem:nomergethengoodpair}
There exists a constant $m$, such that it holds for every chunk of $m$ phases:
If no merge can be performed, then at least one new good pair of runs can be started or otherwise gathering is finished.
\end{lemma}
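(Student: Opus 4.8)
The plan is to set $m$ to be a small constant multiple of the phase period $L$, large enough that every chunk of $m$ consecutive phases contains at least one phase with phase-number $\equiv 0 \pmod{L}$, i.e.\ at least one phase in which \texttt{merge()} and \texttt{initialize-run()} are actually executed (a single period $m=L$ already guarantees exactly one such phase, since among $L$ consecutive integers exactly one is divisible by $L$; I would keep $m$ a constant multiple of $L$ to absorb the two-phase pausing that \texttt{execute-run()} may introduce). Fix such an active phase inside the chunk and distinguish two cases according to the shape of the chain at that moment. If the chain is a merge configuration, Lemma~\ref{lem:merge_working} tells us that the call to \texttt{merge()} either solves the gathering problem or merges at least one robot; under the hypothesis that no merge is performed during the chunk, the former must hold, which is exactly the ``otherwise gathering is finished'' alternative of the statement.

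The main case is a mergeless chain. First I would invoke Lemma~\ref{lem:mergelessmodules} to decompose the whole chain into edge and vertex modules, and then Lemma~\ref{lem:goodvertexpair} to locate two consecutive vertex modules that are both \emph{convex} or both \emph{concave} and are joined by a subchain consisting only of edge modules. Lemma~\ref{lem:goodpairexistence} guarantees that a good pair of runs can be generated on this subchain, and by inspection the run-initialization actions of Table~\ref{table:robot-actions_a} that \texttt{initialize-run()} applies at the endpoints of the run-initiating modules (those in the dashed region of Figure~\ref{fig:stair}) produce exactly such a pair: two runs moving towards each other, one being the reflection of the other across an axis orthogonal to the quasi edge between them. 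Thus, at the active phase, \texttt{initialize-run()} attempts to start a new good pair.

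The step I expect to be the main obstacle is showing that at least one freshly started good pair actually survives the overlap resolution at the end of \texttt{initialize-run()} and the subsequent \texttt{cleanup-runs()} call, rather than being stopped immediately. Here I would argue as follows: a newly initialized run carries the current, hence largest, timestamp, so whenever it overlaps an older run the age rule of \texttt{initialize-run()} stops the \emph{older} one and lets the new run survive; the only way a new run is killed at birth is by overlapping another equally new run, and I would use the geometric separation supplied by the convex/convex (resp.\ concave/concave) vertex pair of Lemma~\ref{lem:goodvertexpair}, together with the node-disjointness and minimum-distance guarantees of Lemma~\ref{lem:run-properties}, to rule this out for the particular good pair constructed in Lemma~\ref{lem:goodpairexistence}. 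Once the pair is in place, Lemma~\ref{lem:older-runs} shows that it nests like enclosing parentheses with the runs already present, and Lemma~\ref{lem:mindist} guarantees that \texttt{cleanup-runs()} will not stop either of its two runs before one of them takes part in a merge. Consequently a new good pair is started within the chunk, completing the dichotomy claimed by the lemma.
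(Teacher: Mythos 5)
Your proof is correct and follows essentially the same route as the paper's: locate an execution of \texttt{merge()} and \texttt{initialize-run()} within every $L$ phases, dispose of the merge-configuration case via Lemma~\ref{lem:merge_working}, and in the mergeless case invoke Lemmas~\ref{lem:goodvertexpair} and~\ref{lem:goodpairexistence} to obtain a new good pair. Your third paragraph on surviving overlap resolution and \texttt{cleanup-runs()} is extra care that the paper's own (very terse) proof omits, deferring such concerns to Lemmas~\ref{lem:mindist} and~\ref{lem:uniquemerge}.
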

\begin{proof}
Every constant number (namely $L$) of phases, our algorithm first executes the \texttt{merge()}, immediately followed by an execution of \texttt{initialize-run()}.
If in \texttt{merge()} no merge is performed, then either gathering is finished or the configuration is mergeless.
In the latter case, Lemma~\ref{lem:goodvertexpair} and \ref{lem:goodpairexistence} ensure that then in \texttt{initialize-run()} a good pair is initiated.
\end{proof}

Because \texttt{execute-run()} is executed during every phase, it takes at most a linear number of timesteps until the runs of a good pair meet each other.
So, together with Lemma~\ref{lem:uniquemerge} (I.e., different good pairs lead to different merges.), we get to the final theorem.

\begin{theorem}\label{thm:totalrunningtime}
The distributed Algorithm~\ref{algo:gathering} \texttt{gathering()} solves the gathering problem in time $\calO(n)$.
This time is asymptotically optimal.
\end{theorem}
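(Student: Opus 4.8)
The plan is to turn the structural guarantees proved above into a single amortized counting argument over the phases, separating the two ways in which \texttt{gathering()} makes progress: direct merges in merge configurations (Lemma~\ref{lem:merge_working}) and merges triggered by good pairs in mergeless configurations (Lemmas~\ref{lem:goodpairexistence}, \ref{lem:EM1-run}, and~\ref{lem:uniquemerge}). First I would fix the constants $C,K,L$ of Lemma~\ref{lem:run-properties} together with the chunk length $m$ of Lemma~\ref{lem:nomergethengoodpair}, so that all the cited statements hold simultaneously throughout the execution. Since a phase consists of a constant number of rounds, it suffices to bound the number of \emph{phases} by $\calO(n)$.

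The heart of the argument is that the total number of merges is $\calO(n)$: every merge action removes at least one robot and the process terminates once the $\calO(1)$ surviving robots lie in a constant-size square, so at most $n-1$ merges ever occur. By Lemma~\ref{lem:uniquemerge}, the map sending each good pair to its uniquely associated merge is injective, hence the total number of good pairs ever initiated is likewise $\calO(n)$, independently of how long the execution runs. I would then partition the execution into consecutive chunks of $m$ phases. By Lemma~\ref{lem:nomergethengoodpair} every chunk is \emph{productive}: either a merge occurs during it, or gathering has already finished, or at least one new good pair is started. I bound the two kinds of productive chunks separately --- a chunk containing a merge is charged to one of the $\calO(n)$ merges, and a chunk starting a new good pair is charged to one of the $\calO(n)$ good pairs, with distinct chunks receiving distinct charges. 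Consequently there are only $\calO(n)$ non-terminal chunks, so the execution lasts $\calO(n)$ chunks, i.e.\ $\calO(n)$ phases and thus $\calO(n)$ rounds.

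The step that makes the bound linear rather than quadratic --- and the one I expect to require the most care --- is the decoupling of the \emph{number} of good pairs from the \emph{time} each good pair needs to resolve. A single good pair may traverse distance $\Theta(n)$ before its merge (Lemma~\ref{lem:EM1-run}), which taken in isolation would suggest $\Theta(n^2)$; the saving comes precisely from pipelining, i.e.\ from launching a new good pair every $L$ phases and letting the $\Theta(n)$-long traversals run concurrently in overlapping time windows. What has to be verified is that this concurrency is harmless, and this is exactly what the earlier lemmas supply: the enclosing-parentheses structure of Lemma~\ref{lem:older-runs}, the disjointness and non-interference properties of Lemma~\ref{lem:run-properties}, and the guarantee of Lemma~\ref{lem:mindist} that \texttt{cleanup-runs()} never stops a good pair before one of its partners participates in a merge. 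Granting these, the injectivity of the good-pair-to-merge map is what collapses the count, since it lets me charge good pairs against merges instead of against elapsed time.

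Finally I would establish asymptotic optimality by a bounded-speed argument. Because each robot has a constant viewing range and moves by $\calO(1)$ grid distance per round, in $t$ rounds no robot can reach or influence a position more than $\calO(t)$ away in $L_1$-distance. A closed chain can be placed so that two of its robots are at $L_1$-distance $\Theta(n)$; collecting all robots into a constant-size square then forces some robot to travel $\Omega(n)$ distance, hence $\Omega(n)$ rounds. Together with the $\calO(n)$ upper bound this yields a running time of $\Theta(n)$, so \texttt{gathering()} is asymptotically optimal for worst-case closed chains.
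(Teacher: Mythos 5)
Your proof is correct, and it rests on the same pillars as the paper's own argument: Lemma~\ref{lem:nomergethengoodpair} (every chunk of $m$ phases yields a merge, a new good pair, or termination), Lemma~\ref{lem:uniquemerge} (the good-pair-to-merge map is injective), and the trivial bound of at most $n-1$ merges. The difference is in how the pieces are assembled. The paper closes the argument by invoking the \emph{traversal time} of a good pair --- since \texttt{execute-run()} runs every phase, the two runs of a good pair meet within $\calO(n)$ phases --- and then adds this linear ``tail'' to the $\calO(n)$ productive events. Your amortized charging over chunks sidesteps that step entirely: you never need to know how long an individual good pair takes to resolve, because any chunk in the tail is itself productive (it contains a merge or launches a new good pair) and is charged injectively against the $\calO(n)$ merges or the $\calO(n)$ good pairs. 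This is a slightly tighter and more self-contained way to finish, since it replaces a quantitative statement about run speed with pure counting; the paper's version, in exchange, makes the pipelining intuition (launches every $L$ phases, concurrent $\Theta(n)$ traversals) more explicit. Your lower-bound argument is the same diameter/bounded-speed observation the paper disposes of in one sentence. One small imprecision worth fixing: termination is defined as \emph{all} robots lying in a constant-size square, not as there being $\calO(1)$ surviving robots --- the chain may still contain many robots folded into that square; this does not affect your bound of $n-1$ merges, which only uses that each merge decreases the robot count by at least one.
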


Note that the lower bound is given by the diameter of the initial configuration.

\section{Outlook}
As mentioned earlier, closed chains typically are worst case examples for gathering algorithms in synchronous models with local view.
Therefore, our result in this paper leads us to conjecture that in such models gathering is possible in linear time, for arbitrary connected configurations on the grid and in the Euclidean plane.

\begin{sloppy}
\printbibliography
\end{sloppy}

\clearpage
\appendix
\section{Appendix}
\setcounter{algorithmNumber}{0}
\subsection{Algorithm Pseudocode}\label{appssec:pseudocode}
\lstset{frame=single}
\lstset{postbreak=\raisebox{0ex}[0ex][0ex]{\ensuremath{\hookrightarrow\space}}}
\lstset{breaklines=true,breakatwhitespace=true}

\begin{lstlisting}[label={lst:main-algo},mathescape,caption=Main Algorithm: \texttt{gathering()}.]
$p \leftarrow p+1$                   // models global clock
execute-run()      // see Listing $\ref{lst:module-movement-subphase}$
if $p \mod L = 0$:
  merge()          // see Listing $\ref{lst:merge-subphase}$
  initialize-run() // see Listing $\ref{lst:run-init-subphase}$
\end{lstlisting}

\begin{lstlisting}[label={lst:merge-subphase},mathescape,caption=Merge-Subphase: \texttt{merge()}.]
for all $k\in\{1,\ldots,K\}$
  if $r$ is one of the black robots of a merge module of type $k$ (see pattern $A_{\ref{action:merge}}$)
    let $M'$ be this module
    if $M'$ has a Type $1$ overlapping with $\leq 1$ many merge modules of Type $k$
       apply the corresponding action of $A_{\ref{action:merge}}$ and $A_{\ref{action:overlap_a}}$, respectively
    if $M'$ has a Type $2$ overlapping with $\leq 1$ many merge modules of Type $k$
       apply the corresponding action of $A_{\ref{action:merge}}$ and $A_{\ref{action:overlap_b}}$, respectively
cleanup-runs()     // see Listing $\ref{lst:cleanup}$
\end{lstlisting}
\begin{lstlisting}[label={lst:run-init-subphase},mathescape,caption=Run-Initialization-Subphase: \texttt{initialize-run()}.]
for $A_i$ in $[A_{\ref{action:runinit_a}},\ldots,A_{\ref{action:runinit_c}},A_{\ref{action:runinit_spec_a}},\ldots,A_{\ref{action:runinit_spec_e}}]$:
  if pattern $A_i$ matches:
    apply action $A_i$, start run $R$ with $\timestamp{R}=p$
    break
if $r$ is runner of a run $R$ and $\exists$run $R'$ with $R'\cap R\neq\emptyset$ and $\timestamp{R}\leq\timestamp{R'}$: // solve overlappings
  stop run $R$
cleanup-runs()     // see Listing $\ref{lst:cleanup}$
\end{lstlisting}
\begin{lstlisting}[label={lst:module-movement-subphase},mathescape,caption=Run-Execution-Subphase: \texttt{execute-run()}.]
$\mathrm{pause} \leftarrow \max\{0,\mathrm{pause}-1\}$:
if $\mathrm{pause}=0$
  if pattern $A_{\ref{action:hop}}$ matches:
    apply action $A_{\ref{action:hop}}$
  else if pattern $A_{\ref{action:hop2}}$ matches:
    apply action $A_{\ref{action:hop2}}$
    $\mathrm{pause} \leftarrow 3$
cleanup-runs()     // see Listing $\ref{lst:cleanup}$
\end{lstlisting}

\newpage
\begin{lstlisting}[label={lst:cleanup},mathescape,caption=Cleanup: \texttt{cleanup-runs()}.]
each run $R$ checks:
  if $\exists$run $R'$ in moving direction on same quasi edge as $R$ with $\dist{R,R'}\leq C$:
    if $\timestamp{R}\neq\timestamp{R'}$:
      if $\timestamp{R}<\timestamp{R'}$: stop $R$
      else: stop $R'$
    else
      if $R,R'$ are moving in the same direction: stop $R$
      else if not ($(R,R')$ is good pair): stop $R$  /* $R'$ will also stop */
for each merge type $k$:
  if pattern $A_{\ref{action:merge}}$ matches locally for merge type $k$:
    if $\exists$runner on a black node in pattern $A_{\ref{action:merge}}$:
      apply action $A_{\ref{action:merge}}$ and stop run/runs
\end{lstlisting}

\subsection{Proofs}
\begin{proof}[Lemma~\ref{lem:mergelessmodules}]
First, we fix an arbitrary robot of the chain and follow the chain from there (in arbitrary direction).
On this walk, no step backwards is possible without getting a merge of type 1 (cf.\ Table~\ref{table:robot-actions_a}.$A_{\ref{action:merge}}$) and thus, we only have to consider steps forward, to the left, and to the right.
Further, no two consecutive turns can turn in the same direction without getting a merge of type 2.
This gives, when following the chain, there must be alternating left- and right turns or a turn must be following by a forward step (resulting in three robots on a line).
Since the chain is connected and consists of more than one robot, there must be at least two left- or two right turns that are connected by forward steps.
Hence, the chain contains a sequence of at least three robots on a line.

We now consider a longest sequence $L$ of robots on a straight line and the consecutive subchain $S$, until there are again three robots on a line.
As argued before, $S$ must consist of alternating left- and right-turns.
Depending on whether the number of turns is even or odd, this is either a vertex or an edge module, and can be completely covered.
The subchain $L$ can be completely covered by overlapping $\EM(0)$ modules.
By repeating this argument, we get that the whole chain can be covered by edge and vertex modules.
\end{proof}

\begin{proof}[Lemma~\ref{lem:goodvertexpair}]
Let $s$ and $s'$ be the starting points of two consecutive vertex modules, whereas the first one is convex and the second one is concave.
To ease the discussion, we consider a vertex module consisting of a vertical vector $\vec{v}$, a diagonal vector $\vec{v}^\ast$, and a horizontal vector $\vec{v}'$ (cf.\ Figure~\ref{fig:goodvertexpair-existence}).
Here, both $\vec{v}$ and $\vec{v}'$ have length $1$ and the $L_1$-length of $\vec{v}^\ast$ is the height of the vertex module.
We say, the \emph{orientation vector} of the vertex module is the addition of these three vectors.

For these two vertex modules, we define two half-planes $H$ and $H'$.
The half-plane $H$ originates at the successor of $s$, contains $s$, and is orthogonal to the orientation vector of the first vertex module.
For the second half-plane $H'$, we construct it analogously to the first one but to contain $s'$ and to be orthogonal to the orientation vector of the second vertex module.
Since one vertex module is convex and the other one concave, and in between there are only edge modules, the orientation vectors of both vertex modules are parallel.
The $L_1$-distance between $s$ and $s'$ is at least $2$ and every edge module in between $s$ and $s'$ increases their minimal distance by at least one. Hence, $H$ is a proper subset of $H'$.

We now walk along a closed chain in an arbitrary direction, starting from an arbitrary robot.
Using the arguments from the proof of Lemma~\ref{lem:mergelessmodules}, there exist three robots on a line.
Let $x$ be the first one of them.
Assuming that all the following modules are edge modules, every of these modules increases the $L_1$-distance to $x$ by at least one.
So there has to exist at least one vertex module.
Let $s$ of Figure~\ref{fig:goodvertexpair-existence} be its starting point.
Using only edge modules, we cannot reenter the interior of $H$.
Since the chain is closed, there must be at least a second vertex module in the chain.

Using our previous discussion, for any consecutive pair of alternating convex and concave vertex modules, the half-plane of the first module is contained in the half-plane of the second one.
Hence, $x$ is contained in both half-planes and the distance to $x$ increases.
Since the chain is connected, this gives a contradiction.
\end{proof}

\begin{figure}[t]
\centering
    \includegraphics{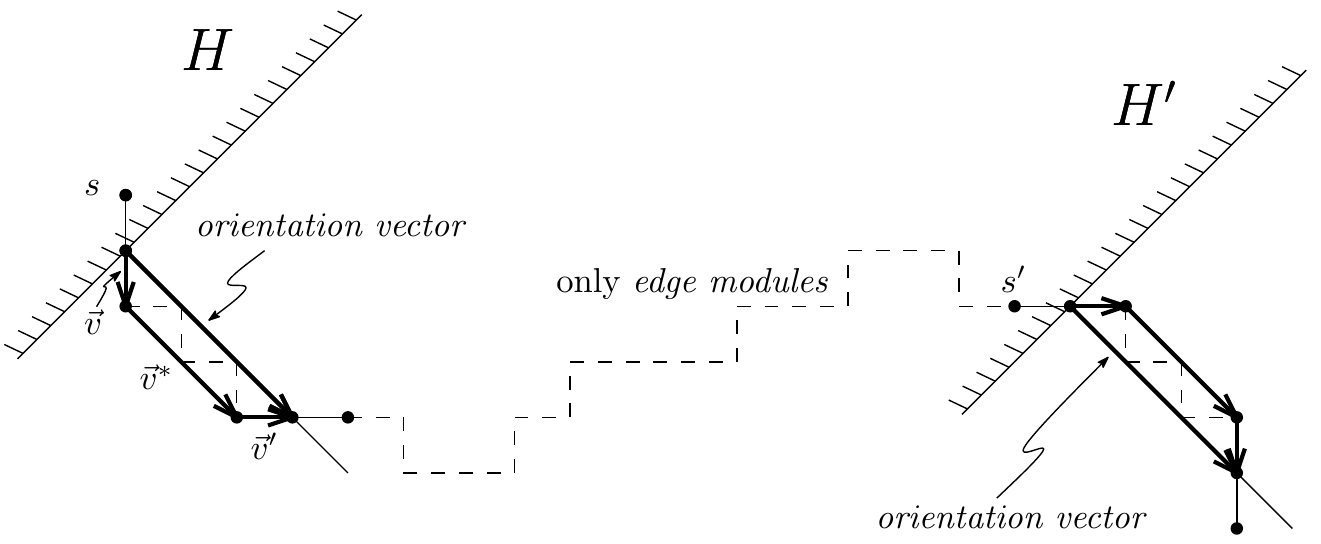}
    \caption{If convex and concave vertex modules are alternating, the starting point $s$ can never be reached again, i.e., the chain cannot be closed.}
    \label{fig:goodvertexpair-existence}
\end{figure}

\begin{proof}[Lemma~\ref{lem:goodpairexistence}]
We consider two consecutive convex or two consecutive concave vertex modules $\VM_1$ and $\VM_2$.
Between them, edge modules $\EM(h)$ with $h\geq 2$ also generate runs.
These initiate hops to different sides of the chain.
Denote by $i$ the run generated at $\VM_1$ moving towards $\VM_2$ and by $i+j$ the $j^{th}$ run generated after $\VM_1$ before $\VM_2$.
Now, if run $i+1$, generated by the first edge module after $\VM_1$, initiates hops to the same side as $i$, we have found a good pair.
If not, then the following run $i+2$ at the edge module must hop to the same side as run $i$.
Then, we can apply the same argument inductively until we find a good pair of runs or reach $\VM_2$:
Either run $i+2k+1$ generated at the next edge module hops to the same side as $i+2k$ or $i+2(k+1)$ hops to the same side as $i+2k$.
If we have to iterate this until reaching $\VM_2$, the property, $\VM_1$ and $\VM_2$ are either both convex or both concave, satisfies the lemma:
The run $i$ which was generated by $\VM_1$ initiates hops to the same side as the run, generated by $\VM_2$ and moving towards $\VM_1$.
This then finishes the proof.
\end{proof}

\begin{proof}[Lemma~\ref{lem:run-properties}]
(\ref{enumitem:disjoint-runs}):
If two consecutive runs $R,R'$ are moving in the same direction then we can choose the constant $C$ large enough such that $\dist{R,R'}<1$ does not happen.
If they are moving towards each other and are not a good pair, then $C$ also ensures that the overlapping can be detected.
And if they are a good pair, then the constant $K$ can be chosen large enough to enable their merge early enough
(Note that in \texttt{cleanup-runs()} a good pair is not interpreted as colliding and so in the property of (\ref{enumitem:disjoint-runs}), the choice of $K$ does not depend on the value of $C$.).
When new runs are initialized, they can overlap with other runs. Algorithm \ref{algo:runinit} \texttt{initialize-run()} immediately solves these overlappings after the init.\\
(\ref{enumitem:consec-runs-collisions}):
After initialization of a run on a quasi edge, this run leaves a quasi edge behind (cf.\ Proposition~\ref{prop:quasiedgedestroy}).
Before the next run is initialized, this quasi edge can be shortened by one by a merge.
Additionally, depending on which initialization pattern of Table~\ref{table:robot-actions_b} (appendix) applies, the new run is already started some constant number of
robots away from the beginning of the quasi edge.
In order to prohibit that the new run collides with its predecessor (i.e. their distance becomes $\leq C$), we need to wait long enough (cf.\ constant $L$) between the generation of two consecutive runs.
The distance between two runs, moving in the same direction, can, because of the pausing in Algorithm \ref{algo:move} \texttt{ex\-e\-cute-run()} fluctuate by four.
This also requires to chose $L$ large enough.
Both is ensured by suitably choosing the value of the constant $L$ such that $L>C+\mathrm{const}$.
On a quasi edge, a merge between consecutive runs cannot happen without removing at least one of them.
Also, if the middle one of three consecutive runs is removed, this cannot lead to a collision.
So merges cannot cause a contradiction of (\ref{enumitem:consec-runs-collisions}).\\
(\ref{enumitem:consec-runs-merges}):
This point is ensured by choosing $C$ large enough such that $C+4>K-1$ holds.\\
(\ref{enumitem:behind-run-merges}):
To ensure this, the outer runs on a quasi edge must not be merged during Algorithm \ref{algo:merge} \texttt{merge()}.
This can be prohibited by calling Algorithm \ref{algo:move} \texttt{execute-run()} sufficiently many times after \texttt{initialize-run()} before \texttt{merge()} such that
the value of $K$ is not large enough anymore for performing a merge behind the run.
So we get for the value of $L$ that $L>K+\mathrm{const.}$\ must hold.
\end{proof}

\begin{proof}[Lemma~\ref{lem:older-runs}]
The subchain $C$ between the partners of $P=\{R,R'\}$ is by definition a \emph{quasi edge}.
At the time when $P$ is initiated, all existing runs on this quasi edge are older than $P$.
Thus, if a newer or coeval run $R^\star$ exists between $R$ and $R'$ at any later time, it would have to be created after or at the same time as the start of $P$.
Because runs are all moving at the same speed and stop if they are moving too close towards each other, $R^\star$ would have to be created
on the subchain between $R$ and $R'$.
This would require the existence and creation of an $\EM(h\geq 2)$ or $\VM(\cdot)$ module on the quasi edge, contradicting Proposition~\ref{prop:quasiedgedestroy}.

Secondly, we assume that only one of the partners $S,S'$, say $S'$, is located on $C$.
This then leads us to the order $S \ldots R \ldots S' \ldots R'$.
Then, $R$ must be located on $C'$ and $R'$ is outside of the subchain.
Then, either $R$ is younger than $S,S'$ or else $S'$ is younger than $R,R'$, contradicting the first part of the lemma.
\end{proof}

\begin{proof*}[Lemma~\ref{lem:mindist}]
Assume $R$ and $R'$ are two consecutive runs on the same quasi edge.
\OBdA assume that this quasi edge is horizontally aligned and $R$ is located to the right of $R'$.
At first, we consider the case that $R,R'$ are moving towards each other.
In this case, \texttt{cleanup-runs()} stops the older run.
Now, consider $R$ to be older than $R'$ and to belong to a good pair that did not lead to a merge yet.
But in this case, the partner of $R$, which we call $R^{\star}$, must be located left of $R'$, contradicting Lemma~\ref{lem:older-runs}.
Otherwise, with $R$ and $R'$ having the same age, we are allowed to delete both if they are not a good pair (else we will get a merge.).

Now, assume that $R$ and $R'$ are moving in the same direction along the chain and that $R$ is following $R'$.
It remains to distinguish the following cases:
\begin{enumerate}
\item $R$ is older than $R'$ and $R$ did not yet have an associated merge:
    Then, the partner $R^{\star}$ of $R$ would have to be located on the left of $R'$.
    This would contradict Lemma \ref{lem:older-runs}.
    Hence, $R,R^{\star}$ cannot be a good pair and we are allowed to stop the older run.
\item $R$ is younger than $R'$:
    Then, not both can belong to good pairs because else the merge which would have shortened the subchain between both too much, would have removed one of them.
    So, assume that just $R'$ is part of a good pair.
    After its initiation, $R'$ starts leaving a quasi edge behind itself, and the constants (cf.\ Lemma~\ref{lem:run-properties}) in conjunction with our algorithm ensure that no run after $R'$ can be started early enough for bringing a collision between both.
    Again, we can safely stop the older run.
\item $R$ and $R'$ are of the same age and $R$ did not yet have an associated merge:
    Then the partner $R^{\star}$ of $R$ has to be located to the right of $R'$.
    But then the subchain between $R$ and $R'$ could not have been a quasi edge (cf.\ Lemma~\ref{lem:older-runs}).
    In this case, we can stop the follower.\qed
\end{enumerate}
\end{proof*}

\begin{proof}[Lemma~\ref{lem:EM1-run}]
Proposition~\ref{prop:quasiedgedestroy} ensures that as long as the good pair is active, the subchain between both remains a quasi edge.
Since a quasi edge only consists of $\EM(0)$ and $\EM(1)$ modules, the actions of Table~\ref{table:robot-actions_a}.$A_{\ref{action:hop}}$ and $A_{\ref{action:hop2}}$ cover these two cases:
The next module ahead of a run is either an $\EM(0)$ module or an $\EM(1)$ module with the same orientation as the module of the run.
If the shape of the next $\EM(1)$ module is a reflection of the run over an axis which is orthogonal to the alignment of the quasi edge, this case is covered by the merge actions $A_{\ref{action:merge}}$.
Together with Lemma~\ref{lem:run-properties}.(\ref{enumitem:disjoint-runs}) and (\ref{enumitem:consec-runs-collisions}), this gives that $R$ (respectively $R'$) can only be stopped by collisions or merges.

If no collision happens until $R$ and $R'$ meet, then it takes at most three executions (because of the pausing) of \texttt{ex\-e\-cute-run()} in order to move them closer together.
Then with Table~\ref{table:robot-actions_a}.$A_{\ref{action:merge}}$, the pair must eventually initiate a merge.

Concerning collisions, the algorithm \texttt{cleanup-runs()} together with Lemma~\ref{lem:mindist} ensures that collisions cannot avoid the pair $R,R'$ to achieve a merge.
Now, (\ref{enumitem:consec-runs-merges}) and (\ref{enumitem:behind-run-merges}) of Lemma~\ref{lem:run-properties} ensure that the associated merge is completely included in the $\subchain{R,R'}$.
\end{proof}

\subsection{Detailed Robot Actions}\label{appssec:tables}
(See next page.)
\begin{table}[H]
\caption{Robot actions $(b)$.}
\label{table:robot-actions_b}
\smallskip
\def\tabularxcolumn#1{m{#1}}
\centering
\begin{tabularx}{\textwidth}{p{1cm} X}
	& \hspace{2.3cm}\textbf{Pattern} \hspace{3.8cm} \textbf{Action} \\
	\hline\hline\\
	\action{action:runinit_a} & \hspace{2.3cm}\includegraphics[scale=0.85]{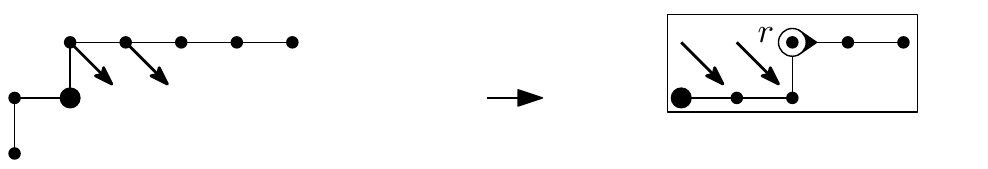} \\
	\action{action:runinit_b} & \hspace{2.3cm}\includegraphics[scale=0.85]{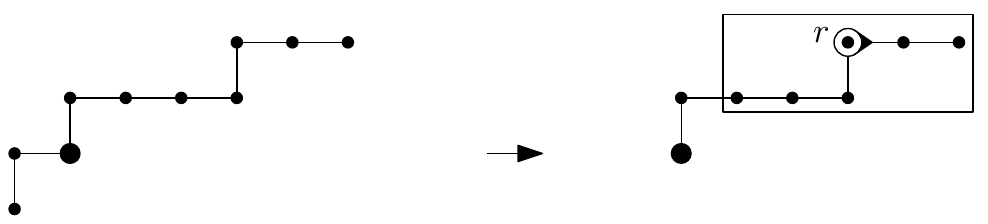} \\
	\action{action:runinit_c} & \hspace{2.3cm}\includegraphics[scale=0.85]{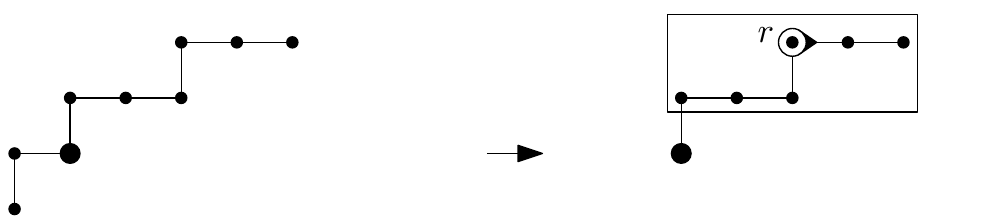} \\
	\\
	& The bold marked robot does not hop. $r$ becomes the new runner. \\
	\hline\\
	\action{action:runinit_spec_a} & \hspace{2.3cm}\includegraphics[scale=0.85]{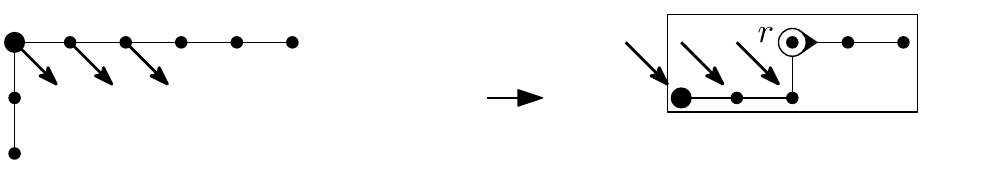} \\
	\action{action:runinit_spec_b} & \hspace{2.3cm}\includegraphics[scale=0.85]{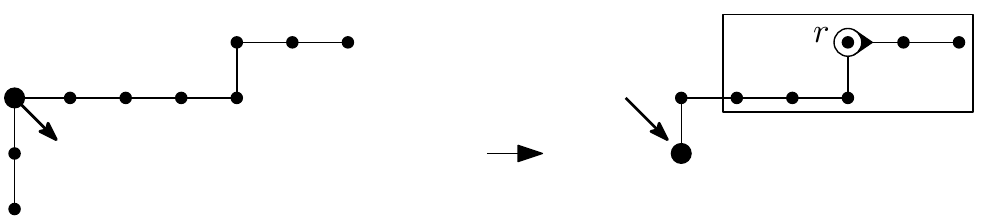} \\
	\action{action:runinit_spec_c} & \hspace{2.3cm}\includegraphics[scale=0.85]{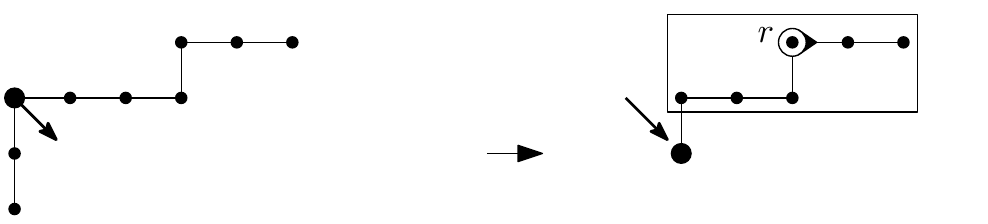} \\
	\action{action:runinit_spec_d} & \hspace{2.3cm}\includegraphics[scale=0.85]{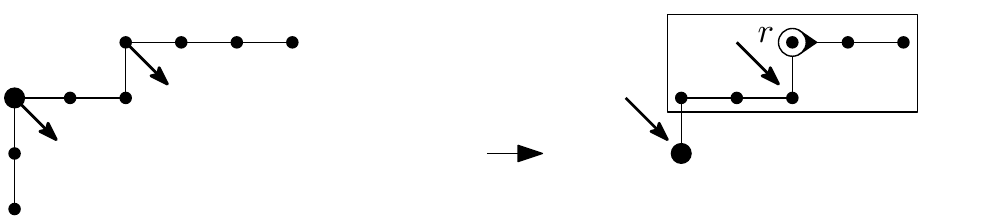} \\
	\action{action:runinit_spec_e} & \hspace{2.3cm}\includegraphics[scale=0.85]{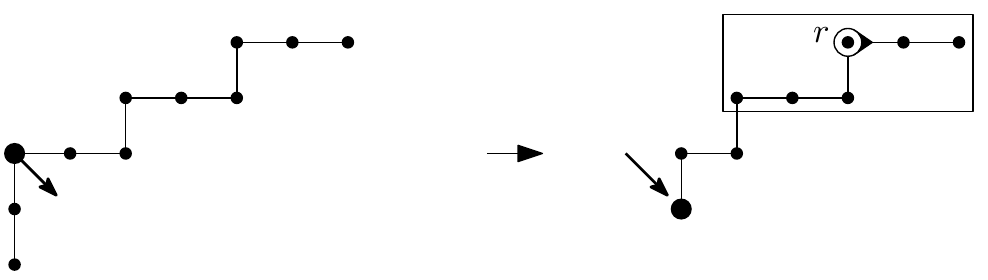} \\
	 \\
	 & The bold marked robot hops. $r$ becomes the new runner. \\
	 \hline
\end{tabularx}
\end{table}

\end{document}